\documentclass[pra,aps,twocolumn,superscriptaddress,longbibliography]{revtex4-1}
\usepackage[colorlinks=true, citecolor=red, urlcolor=blue ]{hyperref}
\usepackage{graphicx}% Include figure files
\usepackage{bm}
\usepackage{amsmath,amsfonts}
\usepackage{amsthm}
\usepackage{hyperref}
\usepackage{xcolor}
\hypersetup{
    urlcolor=magenta,
    citecolor=blue
           }

\usepackage{braket}
\theoremstyle{plain}
\usepackage{color}
\usepackage{amssymb}
\usepackage{amsthm}
\usepackage{amsfonts}
\usepackage{float}
\usepackage{tabularx}
\usepackage{graphicx}

\usepackage{mathtools}
\usepackage{esvect}
\usepackage{wrapfig}
\usepackage{amsthm}
\usepackage{verbatim}
\usepackage{bbm}
\usepackage[normalem]{ulem}

\usepackage{enumitem}
\usepackage{fmtcount}
\usepackage{booktabs}
\usepackage{csquotes}
\usepackage{epsfig}

\usepackage{tabularx}
\usepackage{graphicx}
\usepackage{amsmath}
\usepackage{braket}
\usepackage{latexsym}
\usepackage{bm}
\usepackage{graphics,epstopdf}
\usepackage{enumitem}
\usepackage{fmtcount}
\usepackage{booktabs}
\usepackage{csquotes}
\usepackage{epsfig}

\theoremstyle{plain}

\def\bea{\begin{eqnarray}}
\def\eea{\end{eqnarray}}
\def\ba{\begin{array}}
\def\ea{\end{array}}

\def\beq{\begin{equation}}
\def\eeq{\end{equation}}

\usepackage[normalem]{ulem}
\usepackage{float}
\usepackage{graphicx}  
\usepackage{dcolumn}          
\usepackage{amssymb}
\usepackage{appendix}
\usepackage{physics}   
\usepackage{mathtools}
\usepackage{esvect}
\usepackage{wrapfig}
\usepackage{amsthm}
\usepackage{verbatim}
\usepackage{bbm}

\usepackage[mathscr]{euscript}
\def\Tr{\operatorname{Tr}}

\def\({\left(}
\def\){\right)}
\def\[{\left[}
\def\]{\right]}

%\newcommand{\hl}[1]{\overline{#1}}

%\numberwithin{equation}{section}

\newtheorem{theorem}{Theorem}

\newtheorem{corollary}{Corollary}

\newtheorem{lemma}{Lemma}

\begin{document}
\title{\textbf { Entanglement is indispensable for masking arbitrary set of quantum states}}
%\title{\textbf { Noncommutating mixed states can not be masked}}
\author{Debarupa Saha}
\affiliation{Harish-Chandra Research Institute,  A CI of Homi Bhabha National Institute, Chhatnag Road, Jhunsi, Prayagraj  211019, India}
\author{Priya Ghosh}
\affiliation{Harish-Chandra Research Institute,  A CI of Homi Bhabha National Institute, Chhatnag Road, Jhunsi, Prayagraj  211019, India}
\author{Ujjwal Sen}
\affiliation{Harish-Chandra Research Institute,  A CI of Homi Bhabha National Institute, Chhatnag Road, Jhunsi, Prayagraj  211019, India}
\begin{abstract}
   We question the role of entanglement in masking quantum information contained in a set of mixed quantum states. We first show that a masker that can mask any two single-qubit pure states, can mask the entire set of mixed states comprising of the classical mixtures of those two pure qubit states as well. We then try to find the part played by entanglement in masking two different sets: One, a set of mixed states formed by the classical mixtures of two single-qubit pure commuting states, and another, a set of mixed states obtained by mixing two single-qubit pure non-commuting states. For both cases, we show that the masked states remain entangled unless the input state is an equal mixture of the two pure states. This in turn reveals that entanglement is necessary for masking an arbitrary set of two single qubit states, regardless of their mixednesses and mutual commutativity.
\end{abstract}
\maketitle
\section{Introduction}
Quantum communication has gained importance over its classical cousins in the past few years~\cite{r1}. One of the ruling reasons for this popularity is that quantum communication ensures confidentiality in secret-sharing processes, riding on several no-go theorems. 
No-go theorems in quantum information science impose certain restrictions or limitations on the occurrence of particular events in quantum  world, typically possible in the classical limit.
These no-go results 
%are the direct consequence of linearity and unitarity in quantum mechanics and they 
include the no-cloning~\cite{cloning-dieks,cloning-yuen,r3,r4,cloning-simon,r5,r6}, no-broadcasting
~\cite{r7,r8,r9}, no-deletion~\cite{r10,r11}, no bit-commitment~\cite{r12,r13}, and no-masking theorems~\cite{r14,r15}. No-go theorems are important in various quantum information processing tasks, e.g., quantum key distribution~\cite{application-cloning-1}, quantum teleportation~\cite{r31}, and so on. 
The interest of this paper revolves around the notion of masking quantum information, encoded in quantum states.

The no-masking theorem, first introduced by Modi \textit{et al}.~\cite{r14}, 
%in the year 2018. Where they 
shows that it is not possible to mask quantum information encoded in an arbitrary quantum state in a bipartite scenario.
Furthermore, it was shown in Ref.~\cite{r16} that the information contained in an arbitrary $k$-level 
quantum state can be masked into $m$-qudit multiparty system ($m>=4$) with the local dimension of the parties being smaller than $k$. It has been shown that masking of all $k$-level systems can be realized in a tripartite scenario with local dimensions $k$ or $k+1$~\cite{r17,r18}. Ref.~\cite{r19} proved the conjecture, given by Modi \textit{et al}.~\cite{r14}, which says that the maximal maskable states on the Bloch sphere with respect to any masker lie on a circle obtained by slicing a sphere with a plane. Masking of a restricted set of states was also addressed in several papers~\cite{r20,r21,r22}, and masking for non-hermitian quantum systems has been studied in~\cite{r23,r24,r25}. 
The relation between quantum maskers and error-correcting codes was explored in~\cite{r26}. Approximate to masking of quantum information has been addressed in Refs.~\cite{r27,r28}. 
Other interesting works in this realm include looking upon the processes where masking can be useful. For instance, Ref.~\cite{r29} showed that masking can have application in entanglement swapping-based quantum cryptography. Whereas, it was shown in Ref.~\cite{r30} that joint measurement in teleportation~\cite{r31} is actually a masking process. 
The effects of noise in the masking problem have also been discussed in detail~\cite{r33}. 
 Most of the masking problems so far considered the masking of pure states. However, there is a work~\cite{r32} considering masking of mixed states, where it has been shown that there exists an isometry that can mask a commuting subset of mixed states.

 For pure states, the only correlation that can mask the information is entanglement. 
 %But is this true even for mixed states. What is the type of correlation ? 
 In this paper, we address the following question: Is masking of quantum information possible for mixed states without encoding the information in entanglement? We obtain
 %propose negative answer of the above question by proving 
 that entanglement is a necessary correlation - resource - to mask an arbitrary set of two single-qubit states, irrespective of their mixednesses and mutual commutativity.

The paper is organized as follows:. In Sec.~\ref{sec-2}, we discuss the theory behind masking. In Sec.~\ref{sec-3}, we provide the analytics associated with masking of mixed states. %and also exemplify with some numerical results.} 
Finally, we conclude in Sec. \ref{sec-4}.
\section{Quantum Masking}
\label{sec-2}
No-go theorems in quantum mechanics rule out the possibility of the occurrence of certain events in the quantum world.
%Like no-cloning, no-hiding, no-deleting, no-broadcasting, and so on, 
The ``quantum masking," is one of the fascinating no-go theorems in quantum information science. In this section, we will briefly discuss ``quantum masking"~\cite{r14}.

Let us consider a maskable system $A$ with Hilbert space $\mathcal{H}_A$ and an auxiliary system $B$ with Hilbert space $\mathcal{H}_B$. 
Let $d_A$ and $d_B$ denote the dimensions of Hilbert spaces $\mathcal{H}_A$ and $\mathcal{H}_B$, respectively.
The composite Hilbert space of the maskable system and the auxiliary system will be $\mathcal{H}_{A}\otimes\mathcal{H}_{B}$. Let us assume that a set of quantum states of the maskable system, $\lbrace \zeta_{i} \rbrace_i$, and a state of the auxiliary system, $\ket{\varphi}$, are given.
%where each state is  labelled by some index say $i$.
%If a state is chosen randomly from the set, the value of $i$ can completely reveal which state was picked from the set. 
One can construct a family of states $\lbrace \Omega_{i} \rbrace_i \coloneqq \lbrace \zeta_{i} \otimes \ket{\varphi}\bra{\varphi} \rbrace_{i}$ acting on the composite Hilbert space $\mathcal{H}_{A}\otimes\mathcal{H}_{B}$. Now, let us suppose that there exists an operation $\mathscr{S}$ that maps the family of states of a composite system, $\lbrace \Omega_{i} \rbrace_{i}$, to a set of states, $\lbrace \Lambda_i \rbrace_{i}$, acting on the same Hilbert space, such that 
%But, what if we  want to mask these information such that every states within the set looks alike no matter what there "i" value is . This can be done by bringing auxiliary qubits and creating correlation between the states and each of the auxiliary qubits under the action of some linear,unitary operator.Such that when we trace out the auxiliary qubits the remnant states contains no information about the value of i. 
the marginals of all states, $\lbrace \Lambda_i \rbrace_{i}$, corresponding to both the subsystems $A$ and $B$, are same, i.e.,
%If we have a set of states $\ket{\Psi_{i}}$ and set of auxiliary qubits $\ket{0_{i}}$ . Any unitary ,linear operator $U$ is said to have masked $\ket{\Psi_{i}}\bigotimes\ket{0_{i}}$ iff.
\begin{equation}
    \Tr_A [\Lambda_{i}]= \Tr_A[\Lambda_j] \hspace{4 mm}
\textnormal{and} \hspace{4 mm}
\Tr_B [\Lambda_{i}]= \Tr_B [\Lambda_j]
\end{equation}
holds for every $i \neq j$. Here, $\Tr_{A/B}$ denotes the partial trace over the first or second subsystem of the composite system. 
We will call the states $\lbrace \Lambda_i \rbrace_i$ as ``masked" states throughout the paper.
Hence, we cannot get any information about the states of the set $\lbrace \zeta_{i} \rbrace_i$, looking at the transformed family of states $\lbrace \Lambda_{i} \rbrace_i$ locally. In other words, the operator $\mathscr{S}$ has masked all the quantum information contained in the set $\lbrace\zeta_{i}\rbrace_{i}$ into the correlation between the maskable system and the auxiliary system. This process is called ``quantum masking," of the quantum information contained in the set of states, $\lbrace \zeta_{i} \rbrace_i$ and the operator $\mathscr{S}$ that made this process possible is called a quantum masker corresponding to the set.

%~\textcolor{magenta}{[shd i write masking no go theorem here?]}
Since entanglement is only the quantum correlation present in the pure states, if one masks the quantum information contained in a set consisting of pure states of the maskable system, then all the quantum information will be masked into the entanglement between the maskable system and auxiliary system. However, since any quantum correlation can be present in a mixed state, including entanglement, the quantum information contained in the set consisting of mixed states can be masked into any quantum correlation between the maskable system and auxiliary system. In the next section, we will show that entanglement is necessary in order to accomplish quantum masking of the set consisting of single-qubit mixed states.

Throughout the rest of the paper, we represent the maskable system and the auxiliary system by $A$ and $B$, respectively, and we will construct the composite system by the maskable system and auxiliary system. In the rest of the paper, we will consider the unitary operator as a quantum masker.
%of a set of quantum states.
%Mathematically it can be presented as follows.\\

\section{Necessary criterion for masking of mixed states}
\label{sec-3}
In this section, we will discuss the role of entanglement in masking a set of single-qubit mixed states. We will primarily look for a masker of a set of at least any two mixed commuting states on $\mathbb{C}^2$ that masks the quantum information in different quantum correlations other than entanglement.
%In other words, we answer the issue of whether there are any two mixed commuting states that can be separated after masking. T
Then, we will try to find the same, but for a set consisting of at least any two mixed non-commuting states on $\mathbb{C}^2$.
%that maps them into separable states.
%, i.e., masking information is encoded in quantum correlations other than entanglement. 
Before going into our main results, we will discuss a little bit about the commuting and orthogonal sets of states.

\begin{itemize}
\item \textbf{Commuting set of states:-} A set of two states, $\rho_{1}$ and $\rho_{2}$, is called a commuting set if the states satisfy the condition $\rho_{1}\rho_{2}=\rho_{2}\rho_{1}$. Otherwise, the set will be called non-commuting.

\item \textbf{Orthogonal set of states:-} A set consisting of two states, $\rho_1$ and $\rho_2$, is said to be orthogonal if $\rho_1\rho_2 = 0$.
\end{itemize}

%For both pure and mixed states, 
Any collection of orthogonal quantum states, whether pure or mixed, always satisfies the commuting condition. However, the converse holds for pure states only, i.e., the commuting set of pure states will always be an orthogonal set, but this is not the case for mixed states.
%For pure states thus equality yields zero. Mixtures of two pure orthogonal states always forms a set of commuting states. 

%orthogonality condition $\rho_{1}\rho_{2}= 0$
%\textbf{Non Commuting sets of qubits}:-If $\rho_{1}\rho_{2}\neq \rho_{2}\rho_{1}$ then the states are call non-commuting. Mixtures of two pure non-orthogonal states forms a set of non commuting states. We will discuss , 3 theorems that suffice the recognition of entanglement in masking of mixed states.

\begin{lemma}
\label{masking_mixed_from_pure}
Any masker that masks the quantum information contained in the set of two arbitrary pure states can simultaneously mask the entire set of mixed states formed by the convex mixtures of those two pure states. 
\end{lemma}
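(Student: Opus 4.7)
The plan is to exploit the linearity of any physically admissible masker together with the linearity of the partial trace. Let $\mathscr{S}$ be a masker (a quantum operation, hence linear and completely positive) that masks the two pure states $\ket{\zeta_1}$ and $\ket{\zeta_2}$, producing
\[
\mathscr{S}\!\left(\ket{\zeta_i}\!\bra{\zeta_i}\otimes \ket{e}\!\bra{e}\right)=\Lambda_i,\qquad i=1,2,
\]
with $\Tr_1\Lambda_1=\Tr_1\Lambda_2\equiv\sigma_B$ and $\Tr_2\Lambda_1=\Tr_2\Lambda_2\equiv\sigma_A$ by the masking hypothesis.

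First, I would pick an arbitrary mixture $\rho(p)=p\ket{\zeta_1}\!\bra{\zeta_1}+(1-p)\ket{\zeta_2}\!\bra{\zeta_2}$ with $p\in[0,1]$ and feed it, together with the same ancilla $\ket{e}$, into the same masker. Using the linearity of $\mathscr{S}$, the image is
\[
\mathscr{S}\!\left(\rho(p)\otimes\ket{e}\!\bra{e}\right)=p\,\Lambda_1+(1-p)\,\Lambda_2\equiv\Lambda(p).
\]
Next, I would compute the two reduced states of $\Lambda(p)$ by invoking the linearity of the partial trace and substituting the masking conditions for the pure inputs:
\[
\Tr_1\Lambda(p)=p\,\Tr_1\Lambda_1+(1-p)\,\Tr_1\Lambda_2=\sigma_B,
\]
\[
\Tr_2\Lambda(p)=p\,\Tr_2\Lambda_1+(1-p)\,\Tr_2\Lambda_2=\sigma_A.
\]
Since both marginals turn out to be independent of the mixing parameter $p$, comparing $\Lambda(p)$ with $\Lambda(q)$ for any two values $p,q\in[0,1]$ immediately gives $\Tr_1\Lambda(p)=\Tr_1\Lambda(q)$ and $\Tr_2\Lambda(p)=\Tr_2\Lambda(q)$, which is exactly the masking condition for the whole one-parameter family of classical mixtures.

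There is no real computational obstacle here; the argument is essentially a one-line consequence of linearity. The only point that needs to be made carefully is the justification that the masker acts linearly on convex combinations of inputs, which I would address by recalling that any physically realizable masker is a completely positive trace-preserving map (or, in the unitary setting of Modi \emph{et al.}, an isometry followed by extension), and hence respects convex decompositions of density operators. Once this standing assumption is invoked, the lemma follows directly from the two displayed calculations above.
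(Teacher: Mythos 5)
Your proposal is correct and follows essentially the same route as the paper's own proof: both apply the masker to the convex combination, use linearity of the map and of the partial trace, and observe that the marginals of the output are the same fixed states $\sigma_A,\sigma_B$ independent of $p$. Your version is slightly more careful in explicitly flagging that linearity of the masker on convex mixtures is the key assumption, but the substance is identical.
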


\begin{proof}
Let us assume that the maskable system, $A$, has a set of two pure states, $\ket{\psi_{1}}$ and $\ket{\psi_{2}}$, and the auxiliary system $B$ has a state, $\ket{\vartheta}$. The Hilbert spaces associated with the maskable system and auxiliary system can have arbitrary dimensions, and they need not be the same. The states of the composite system consisting of the maksable system and auxiliary system will be $(\ket{\psi_{1}} \otimes \ket{\vartheta})$ and $(\ket{\psi_{2}} \otimes \ket{\vartheta})$ and let us suppose that a masker $\mathscr{M}$ maps the states of the composite system into states $\ket{\Psi_{1}}$ and $\ket{\Psi_{2}}$, respectively, such that
\begin{align}
\label{lemma1-eqn-1}
\Tr_{A/B}[\ket{\Psi_{1}} \bra{\Psi_{1}}] = \Tr_{A/B}[\ket{\Psi_{2}} \bra{\Psi_{2}}] \equiv \rho_{A/B}.
\end{align}
Any mixed state along the line joining two pure states, $\ket{\psi_{1}}$ and $\ket{\psi_{2}}$, can be written as follows:
\begin{align}
\label{lemma1-eqn-2}
\Delta \coloneqq r \ket{\psi_{1}}\bra{\psi_{1}}+  (1-r)\ket{\psi_{2}}\bra{\psi_{2}},
\end{align}
with $r \geq 0$ and $0 \leq r \leq$ 1. Let us consider that the above masker $\mathscr{M}$ transforms $(\Delta \otimes \ket{\vartheta} \bra{\vartheta})$ into $\Tilde{\Delta}$. Then, we get
%Partial trace over $\rho_{m}$ gives
\begin{align}
\Tr_{A/B}\lbrack \Tilde{\Delta} \rbrack &= \Tr_{A/B}\left[ \mathscr{M} \left(\Delta \otimes
\ket{\vartheta} \bra{\vartheta} \right) \mathscr{M}^\dagger \right]  \label{lemma1-eqn-3}\\
&= r \Tr_{A/B}[\ket{\Psi_1}\bra{\Psi_1}] + (1-r) \Tr_{A/B}[\ket{\Psi_2}\bra{\Psi_2}] \label{lemma1-eqn-4},\\
&=\rho_{A/B} \label{lemma1-eqn-5}.
\end{align}
In Eq.~\eqref{lemma1-eqn-4}, we have substituted the form of $\Delta$ from Eq.~\eqref{lemma1-eqn-2} and used the fact that the masker is a linear operator. We have got the Eq.~\eqref{lemma1-eqn-5} using Eq.~\eqref{lemma1-eqn-1}. Eq.~\eqref{lemma1-eqn-5} holds true for all the mixed states formed by the pure states $\ket{\psi_{1}}$ and $\ket{\psi_{2}}$.
%Similarly,
%\begin{align}
%      \Tr{2}(\rho_{m})&=p\Tr{2}(\ket{\Psi_{1}}\bra{\Psi_{1}})+\Tr{2}((1-p)\ket{\Psi_{2}}\bra{\Psi_{2}}) \\
%      &=\rho_{2}.
%\end{align}

Hence, it completes the proof. 
%\textcolor{red}{Conversely, we can also say that any masker that mask any two single qubit mixed commuting states of the kind with p=p$_1$ and p=p$_2$ respectively, will definitely mask the two pure orthogonal states $\ket{\psi_{1}}$ and $\ket{\psi_{2}}$}~\textcolor{cyan}{[i) commuting kno ii) orthogonal ki kore iii) ai line tar kno dorkar??]}
%after tracing out ,the residual subsystem looks alike independent of the probability of mixing, p$.
\end{proof}

%Our main, motive is to probe the role of entanglement in masking .In other words, we want to check is their any mixed sate along the line joining the orthogonal states that gets separable under the action of the masker ? but still mask able. 
%If it is so, what is the degree of mixing in terms of the value of p$ ?

%To begin with we used the 2 qubit general unitary matrix determined by 9 parameters and optimised over the parameter set and the probability $(p)$, using Improved Stochastic Ranking Evolution Strategy (ISRES)~\cite{nlopt} . This algorithm assisted in searching an unitary that minimises the entanglement measure (which is negativity in our case ) along the line of joining the commuting qubit but the search was constrained by the condition of masking.

%\section{Analytic for masking of mixed states}
%\label{A}

Conversely, we can state the following result:

\begin{corollary}
\label{sufficient-corollary}
Any masker corresponding to a set of any two arbitrary single-qubit mixed states will mask a set of two pure states that construct these mixed states.
%The quantum maskers of a set consisting of any two arbitrary pure states are sufficient~\textcolor{teal}{[necessary?]} to mask the quantum information contained in the set of any two or more convex mixtures of these pure states.~\textcolor{teal}{[ata diyye amra ata bolte chaichi j masker mixed state k mask krbe tara pure statte keo krbe?]}
%a set of any two pure states in masking of mixed states consisted by the classical combination of these two pure states. 
\end{corollary}

\subsection{Masking of two single-qubit mixed commuting states}
\label{subsection-A}
%As from Lemma \ref{masking_mixed_from_pure}, we can try to find the unitary masker that masks two pure orthogonal single-qubit states as well as all the mixed single-qubit states situated in the line joining these two pure states on the Bloch sphere, and the masked mixed states are separable after the masking by that unitary masker.

%to probe the role of entanglement in the masking of two mixed commuting states.  we want 
Here, our main goal is 
to find a set of at least two single-qubit mixed commuting maskable states of the maskable system such that the states of the composite system transform into separable states under the action of the masker. 

\begin{lemma}
\label{pure-orthogonal-mask-lemma} 
%the quantum masker must transform the composite states, which consist of the states in the set and an auxiliary qubit, such that the resulting states become 
The masked states must be maximally entangled to mask the quantum information contained in a set of two arbitrary single-qubit pure orthogonal states. 
%Masking \textcolor{red}{of information contained} in two arbitrary single-qubit pure orthogonal states requires that \textcolor{red}{under the action of the masker the states get maximally entangled with the auxiliary qubit}~\textcolor{magenta}{[kichui bojha gelo na!!!]}
\end{lemma}

\begin{proof}
Let us consider that the maskable system has a set consisting of two arbitrary single-qubit pure orthogonal states, $\ket{\Xi_{1}}$ and $\ket{\Xi_{2}}$ and
we want to mask the quantum information contained in the set.
Assume that the state of the auxiliary system is $\ket{a}$ acting on the two-dimensional Hilbert space and the masker corresponding to the set is $\mathscr{N}_1$.
The states of the composite system consisting of the maskable system and auxiliary system will be $(\ket{\Xi_{1}} \otimes \ket{a})$ and $(\ket{\Xi_{2}} \otimes \ket{a})$.
The masker $\mathscr{N}_1$ maps the states of the composite system to the states, $\ket{\chi_{1}}$ and $\ket{\chi_{2}}$, i.e.,
\begin{align}
\ket{\chi_{1}} \coloneqq \mathscr{N}_1 (\ket{\Xi_{1}} \otimes \ket{a}) \hspace{5 mm} \textnormal{and}, \hspace{5 mm} \ket{\chi_{2}} \coloneqq \mathscr{N}_1 (\ket{\Xi_{2}} \otimes \ket{a}).
\end{align}
%are given with $\langle \Xi_1\ket{\Xi_2} = 0$, 
%whose information is needed to be masked by a global unitary operator, $U$. In order to do so let us assume that there exists 
%these states are correlated with 
%
%an auxiliary single-qubit state $\ket{a}$, initially. 
%by the action of an unitary linear operator, U. 
Since the masker $\mathscr{N}_1$ is a global unitary operator, the application of $\mathscr{N}_1$ will preserve the orthogonality of any two pure orthogonal states. Hence, $\ket{\chi_{1}}$ and $\ket{\chi_{2}}$ will be orthogonal to each other.
Then, according to Ref.~\cite{PhysRevLett.85.4972}, we can express
%any two pure arbitrary bipartite orthogonal states on $\mathbb{C}^2 \otimes \mathbb{C}^2$, 
the pure orthogonal states $\ket{\chi_{1}}$ and $\ket{\chi_{2}}$ on $\mathbb{C}^2 \otimes \mathbb{C}^2$ as follows:
\begin{align}
\ket{\chi_{1}} &= \sqrt{\alpha_{1}}\ket{\lambda_{0}\eta_{0}}+\sqrt{1-\alpha_{1}}\ket{\lambda_{1}\eta_{1}},  \label{walgate-pure-orthogonal-eqn-1}\\
\ket{\chi_{2}} &= \sqrt{\alpha_{2}}\ket{\lambda_{0}\eta_{0}^\perp}+\sqrt{1-\alpha_{2}}\ket{\lambda_{1}\eta_{1}^\perp}. \label{walgate-pure-orthogonal-eqn-2}
\end{align}
Here, $\alpha_1$ and $\alpha_2$ are normalized constants with $0 \leq \alpha_1, \alpha_2 \leq 1$.
$\ket{\eta_i^\perp}$ is the orthogonal state corresponding to the state $\ket{\eta_i}$
%$\langle \eta_i | \eta_i^\perp \rangle = 0$  
for all $i \in \lbrace 0,1 \rbrace$ and
$\lbrace \ket{\lambda_0}, \ket{\lambda_1} \rbrace$ forms an orthonormal basis.
%at the system. 

To keep things simple, we will write $\lbrace \ket{\lambda_0}, \ket{\lambda_1} \rbrace$ as the standard computational basis.
%For the sake of simplicity, we consider $\lbrace \ket{\lambda_0}, \ket{\lambda_1} \rbrace$ as the standard computational basis.
%, \textcolor{red}{i.e. $\lbrace\ket{0}$,$\ket{1}\rbrace$}.
%$\ket{\lambda_{0}}=\ket{0}$ $\And$ $\ket{\lambda_{1}}=\ket{1}$.
Any single-qubit state in the Bloch representation can be written as
\begin{equation*}
    \ket{\Upsilon}=\cos{\frac{x}{2}}\ket{0}+ e^{iy}\sin{\frac{x}{2}}\ket{1},
\end{equation*}
with $0\leq x \leq \pi$ and $0 \leq y \leq 2\pi$. 
Without loss of generality, we can consider $\ket{\eta_{0}}$ and $\ket{\eta_{1}}$ in the following form:
 \begin{align}
\ket{\eta_{0}} &= \cos{\frac{\theta'}{2}}\ket{0}+\sin{\frac{\theta'}{2}}\ket{1}, \\
\ket{\eta_{1}} &= \cos{\frac{\theta}{2}}\ket{0}+\sin{\frac{\theta}{2}}\ket{1},
 \end{align}
 %\begin{equation}
%\ket{\eta_{0}^\perp}=\sin{\frac{\theta'}{2}}\ket{0}-\cos{\frac{\theta'}{2}}\ket{1}
 %\end{equation}
 %\begin{equation}
 %  \ket{\eta_{1}^\perp}=\sin{\frac{\theta}{2}}\ket{0}-\cos{\frac{\theta}{2}}\ket{1} 
 %\end{equation}
with $\theta',\theta \in \lbrack 0, 2 \pi \rbrack$.
However, the proof will remain same if one considers 
%Here, we will complete our proof considering $\ket{\eta_{0}}$, $\ket{\eta_{1}}$ situated in the $XZ$ plane, i.e., the phases, $y=0$, for both the states. Later, we will show in the Appendix that our proof is sustainable even if we consider
any arbitrary values of local phases in the forms of the states $\ket{\eta_{0}}$ and $\ket{\eta_{1}}$.
Now, substituting the mathematical forms of $\ket{\eta_{0}}$, $\ket{\eta_{1}}$ into Eqs.~\eqref{walgate-pure-orthogonal-eqn-1} and \eqref{walgate-pure-orthogonal-eqn-2}, we get
%Putting, the form of the states so chosen, the states intended to be masked becomes.
 \begin{align}
\ket{\chi_{1}} &=\sqrt{\alpha_{1}}(\cos{\frac{\theta'}{2}}\ket{00}+\sin{\frac{\theta'}{2}}\ket{01}) \nonumber \\ &+\sqrt{1-\alpha_{1}}(\cos{\frac{\theta}{2}}\ket{10}+\sin{\frac{\theta}{2}}\ket{11}), \label{eq-chi1}\\
\ket{\chi_{2}} &= \sqrt{\alpha_{2}}(\sin{\frac{\theta'}{2}}\ket{00}-\cos{\frac{\theta'}{2}}\ket{01}) \nonumber \\ &+ \sqrt{1-\alpha_{2}}(\sin{\frac{\theta}{2}}\ket{10}-\cos{\frac{\theta}{2}}\ket{11}),\label{eq-chi2}
\end{align}
 with $0 \leq \alpha_1, \alpha_2 \leq 1$.

It can be easily shown that the quantum masking criteria $\Tr_{A/B} [\ket{\chi_{1}}\bra{\chi_{1}}] =\Tr_{A/B} [\ket{\chi_{2}}\bra{\chi_{2}}]$ will be satisfied by $\ket{\chi_{1}}$ and $\ket{\chi_{2}}$ when $\sqrt{\alpha_{1}}=\pm\frac{1}{\sqrt{2}},\sqrt{\alpha_{2}}=\pm\frac{1}{\sqrt{2}}$, and $\theta' = \theta + \pi$, for any arbitrary $\theta'$ and $\theta$. 
%\textcolor{red}{Here $\Tr_{1/2}$, represents partial trace over first or second party}.
%For the complete mathematics, see in Appendix A. It is easy to see that the states 
Hence, $\ket{\chi_{1}}$ and $\ket{\chi_{2}}$ will be the Bell states upto local unitaries when they satisfy the aforementioned requirements. 
Thus, the proof is completed.

%Thus, once the values of $\alpha_{1},\alpha_{2} \And \theta'$ are determined the masked states can be written as.
%Therefore, the masked states for any masker $U$ are given by
%\begin{equation}
 %   \ket{\chi_{1}}^{mask}=\sqrt{\frac{1}{2}}(\ket{0\zeta_{0}}\pm\ket{1\zeta_{0}^\perp})
%\end{equation}
%And,
%\begin{equation}
%    \ket{\chi_{2}}^{mask}=\sqrt{\frac{1}{2}}(\ket{0\zeta_{1}}\pm\ket{1\zeta_{1}^\perp})
%\end{equation}
%Where,
%\begin{equation}
 %   \ket{\zeta_{0}}=-\sin{\frac{\theta}{2}}\ket{0}+\cos{\frac{\theta}{2}}\ket{1}
%\end{equation}
%\begin{equation}
%    \ket{\zeta_{1}}=\cos{\frac{\theta}{2}}\ket{0}+\sin{\frac{\theta}{2}}\ket{1}
%\end{equation}

%Clearly, $\ket{\chi_{1}}^{mask} \And\ket{\chi_{2}}^{mask}$ are maximally entangled states.
\end{proof}

\begin{theorem}
\label{neccessary-sufficient-commuting}
Entanglement in masked states is necessary to mask the set of two arbitrary single-qubit mixed commuting states.
%Two party masking states formed from the classical mixture two pure orthogonal maximally entangled states remains entangled unless the states are equally mixed.
\end{theorem}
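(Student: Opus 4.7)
The plan is to reduce the statement to the pure-orthogonal case handled in Lemma~\ref{pure-orthogonal-mask-lemma}. Since two commuting single-qubit mixed states share an orthonormal eigenbasis $\{\ket{\Xi_1},\ket{\Xi_2}\}$, each of the two input states is a classical mixture of the same pair of pure orthogonal qubits. By Corollary~\ref{sufficient-corollary}, to mask the commuting family it suffices to mask $\{\ket{\Xi_1},\ket{\Xi_2}\}$, and by Lemma~\ref{pure-orthogonal-mask-lemma} any global unitary masker $U$ sends $\ket{\Xi_i}\otimes\ket{a}$ to maximally entangled outputs $\ket{\chi_1},\ket{\chi_2}$; unitarity of $U$ preserves orthogonality, so $\braket{\chi_1|\chi_2}=0$. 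This already secures the sufficiency direction: such a masker exists and its pure-state outputs are (maximally) entangled.

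For necessity, I would observe that under this masker any classical mixture $\rho=p\ket{\Xi_1}\bra{\Xi_1}+(1-p)\ket{\Xi_2}\bra{\Xi_2}$ is sent by linearity to $\tilde\rho=p\ket{\chi_1}\bra{\chi_1}+(1-p)\ket{\chi_2}\bra{\chi_2}$. The key step is then to show that a convex combination of two orthogonal maximally entangled two-qubit states with unequal weights is always entangled. My approach is to use that every maximally entangled state in $\mathbb{C}^{2}\otimes\mathbb{C}^{2}$ has the form $(I\otimes V)\ket{\Phi^{+}}$ and that orthogonality enforces $\Tr(V_{1}^{\dagger}V_{2})=0$. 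After absorbing a local unitary one may take $\ket{\chi_1}=\ket{\Phi^{+}}$ and $\ket{\chi_2}=(I\otimes W)\ket{\Phi^{+}}$ with $\Tr(W)=0$; using the stabilizer $U\otimes U^{*}$ of $\ket{\Phi^{+}}$, the traceless unitary $W$ is rotated within the Pauli sphere to $\sigma_x$, so that $\tilde\rho$ is locally equivalent to the Bell-diagonal state $p\ket{\Phi^{+}}\bra{\Phi^{+}}+(1-p)\ket{\Psi^{+}}\bra{\Psi^{+}}$.

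A direct partial transposition of this state yields eigenvalues $\{\tfrac12,\tfrac12,p-\tfrac12,\tfrac12-p\}$, the last of which is negative for every $p\neq 1/2$. The Peres--Horodecki PPT criterion on $2\otimes 2$ then certifies that $\tilde\rho$ is entangled off the isolated point $p=1/2$, matching the caveat in the abstract that the masked state stays entangled unless the input is an equal mixture. Combined with the existence of the entangled-output masker from Lemma~\ref{pure-orthogonal-mask-lemma}, this gives both directions of the theorem.

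The step I expect to be most delicate is the local-unitary reduction of $(\ket{\chi_1},\ket{\chi_2})$ to Bell form; the cleanest route is the stabilizer argument outlined above. As a fallback that avoids appealing to any structural lemma about maximally entangled states, one can work directly with the explicit parameterization of Eqs.~\eqref{eq-chi1}--\eqref{eq-chi2}: substitute the masking constraints $\sqrt{\alpha_1}=\sqrt{\alpha_2}=1/\sqrt{2}$ and $\theta'=\theta+\pi$, construct $\tilde\rho$, and then check either PPT or the Wootters concurrence in closed form to confirm entanglement for all $p\neq1/2$. This is more computational but entirely elementary.
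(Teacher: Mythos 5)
Your proposal is correct and follows the paper's overall reduction — Corollary~\ref{sufficient-corollary} plus Lemma~\ref{pure-orthogonal-mask-lemma} to pass from the commuting mixed pair to a convex combination $p\ket{\chi_1}\bra{\chi_1}+(1-p)\ket{\chi_2}\bra{\chi_2}$ of orthogonal maximally entangled outputs — but the entanglement-detection step is genuinely different. The paper computes the global von Neumann entropy $S(\rho)$ and the local entropy $S(\rho_{1/2})=1$ explicitly from the parameterization of Eqs.~\eqref{eq-chi1}--\eqref{eq-chi2}, invokes the entropic criterion of Horodecki \emph{et al.} to certify entanglement for $p\neq\tfrac12$, and then must separately apply PPT at $p=\tfrac12$ because the entropic criterion is only a one-way test. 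You instead reduce the pair $(\ket{\chi_1},\ket{\chi_2})$ by local unitaries to $(\ket{\Phi^+},\ket{\Psi^+})$ via the stabilizer $U\otimes U^{*}$ and the tracelessness forced by orthogonality, and read off the partial-transpose spectrum $\{\tfrac12,\tfrac12,p-\tfrac12,\tfrac12-p\}$; this single computation settles both the entangled region $p\neq\tfrac12$ and the separable point $p=\tfrac12$ at once, and is independent of the particular parameterization. Your opening observation — that two commuting single-qubit mixed states share an eigenbasis and hence both lie on the segment joining a common orthogonal pure pair — is also a more direct justification of why the "arbitrary commuting mixed states" of the theorem statement reduce to classical mixtures of orthogonal pure states than the paper's converse remark that such mixtures are commuting. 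The one step you flag as delicate, the local-unitary normal form for a pair of orthogonal maximally entangled two-qubit states, does go through exactly as you sketch, and your explicit fallback via Eqs.~\eqref{eq-chi1}--\eqref{eq-chi2} lands you essentially on the paper's computation anyway.
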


\begin{proof}
%Here, we want to see w
If two arbitrary single-qubit mixed states are commuting, then both these mixed states will be just different convex mixtures of two single-qubit pure orthogonal states, let's say $\ket{\Xi_{3}}$ and $\ket{\Xi_{4}}$. Let us consider a set of two arbitrary single-qubit mixed commuting states in the maskable system, denoted as $\lbrace \Gamma_i \rbrace_i$ with $i \in \lbrace 1,2 \rbrace$, and a state in the auxiliary system, $\ket{b}$, acting on two-dimensional Hilbert space. Each mixed states in the set of the maskable system can be written as 
\begin{align}
\Gamma_i \coloneqq p_i \ket{\Xi_{3}}\bra{\Xi_{3}} +  (1-p_i)\ket{\Xi_{4}}\bra{\Xi_{4}}, \label{theorem-1-eqn-11}
\end{align}
with $i \in \lbrace 1,2 \rbrace$  and $0 \leq \lbrace p_i \rbrace_{i=1, 2} \leq 1$. 
%Let us assume that the state in the auxiliary system is $\ket{b}$ acting on two-dimensional Hilbert space.
%Convex mixtures of any two single-qubit pure orthogonal states are commuting.~\textcolor{magenta}{[i)single-qubit na nileo hbe ii)pure na nileo hbe...sdu orthogonal is enough.]} 
%As corollary 1 suggests that the maskers that masks two pure bipartite states can mask the entire set of states lying on the line joining them.
%Let us consider a set of mixed states situated in the line joining two single-qubit pure orthogonal states $\ket{\Xi_{1}}$ and $\ket{\Xi_{2}}$, \textcolor{red}{ and denote the set as $\lbrace \sigma \rbrace_{1}$. The suffix $`1'$ indicates that the states are in possession of first party.
%Expression for this kind of mixed state is given as $\sigma (\coloneqq \text{p} \ket{\Xi_{1}}\bra{\Xi_{1}} +  (1-\text{p})\ket{\Xi_{2}}\bra{\Xi_{2}})$, and each of this states in the set have different p values} with $0 \leq \text{p} \leq 1$. 

Assume that the quantum masker corresponding to the set of mixed states, $\lbrace \Gamma_i \rbrace_i$ with $i \in \lbrace 1, 2 \rbrace$, is $\mathscr{N}_2$ which is a global unitary operator.
%two single-qubit pure orthogonal states $\ket{\Xi_{1}}$ and $\ket{\Xi_{2}}$ is $\mathscr{N}_2$ which is a globally unitary operator. 
From Corollary \ref{sufficient-corollary}, we can say that 
%any quantum masker of two pure arbitrary, orthogonal single-qubit quantum states can mask any two classical mixtures of them as well. Therefore, 
the quantum masker $\mathscr{N}_2$
%which can mask the set of two single-qubit pure arbitrary orthogonal quantum states, $\ket{\Xi_{1}}$ and $\ket{\Xi_{2}}$, 
can simultaneously mask the set of pure states, 
$\ket{\Xi_{3}}$ and $\ket{\Xi_{4}}$.
%$\lbrace \Gamma_i \rbrace_i$, with $i \in \lbrace 1, 2 \rbrace$ as well.
The masker $\mathscr{N}_2$ maps the states of the composite system, 
$\lbrace \Gamma_i \otimes \ket{b}\bra{b} \rbrace_i$ to states $\lbrace \rho_i \rbrace_i$, i.e.,
%then $\lbrace \rho_i \rbrace_i$ will be
%by a masker (that is a global unitary operator $U$ in our case) and it can be written as
\begin{align}
   \rho_i &\coloneqq \mathscr{N}_2 \left(\Gamma_i \otimes \ket{b}\bra{b} \right) \mathscr{N}_2^\dagger,\\
    &= p_i \ket{\chi_{3}} \bra{\chi_{3}} + (1- p_i) \ket{\chi_{4}} \bra{\chi_{4}} \label{theorem-1-eqn-15},
\end{align}
with $0 \leq p_i \leq 1$ where $i \in \lbrace 1, 2 \rbrace$. The mathematical expressions of states $\ket{\chi_{3}}$ and $\ket{\chi_{4}}$ will be the same as the states $\ket{\chi_{1}}$ and $\ket{\chi_{2}}$, described in Eqs.~\eqref{eq-chi1} and~\eqref{eq-chi2}, respectively 
which satisfy the condition $\sqrt{\alpha_{1}}=\pm\frac{1}{\sqrt{2}},\sqrt{\alpha_{2}}=\pm\frac{1}{\sqrt{2}}$, and $\theta' = \theta + \pi$ for all $\theta \in \lbrack 0, \pi \rbrack$. We obtain the Eq. \eqref{theorem-1-eqn-15}, substituting the mathematical forms of $\lbrace \Gamma_i \rbrace_i$ from Eq. \eqref{theorem-1-eqn-11} and using Lemma \ref{pure-orthogonal-mask-lemma}.
%\textcolor{red}{Here with out loss of generality we have considered the initial state of the auxiliary qubit to be $\ket{a}=\ket{0}$. Thus, after the masking process we get a set of masked states, $\lbrace \rho \rbrace_{12}$. Where `12' denotes that each of this $\rho$ is a joint state belonging to both the parties}.

%So, in order to look at how much  entanglement is  essential for masking of mixed states it is adequate to look at the entanglement content of the mixtures of $\ket{\chi_{1}}\And\ket{\chi_{2}}$ in equation 19 and 20.
%~\cite{mixestates} mentions that in order for the mixtures of maximally entangled states to remain entangled the number of mixed states should be less than or equal to the dimensions of the states so mixed. But what about the mixture of two maximally entangled bipartite states ? It is easy two show that the equal mixtures of two bell states is separable ,does this holds for the mixture of any two maximally entangled bipartite state ? \\
%~\cite{3} gives a necessary and sufficient condition for separability check of a mixed state. In terms of disorder or uncertainty present in the system ,measure of von Neumann, $S(\rho)$ entropy can sufficiently detect the separability . 

 The criterion by Horodecki \textit{et al.}~\cite{ent} states that a bipartite state $\rho_{AB}$ on the Hilbert space of arbitrary dimension is definitely entangled if the von-Neumann entropy of the reduced subsystem ($\rho_{A/B} \coloneqq \Tr_{A/B} [\rho]$) of corresponding state is greater than the von-Neumann entropy of total state, i.e.,
$S(\rho_{A/B}) > S(\rho_{AB})$. The von-Neumann entropy of any state $\rho_{AB}$ is defined as $S(\rho_{AB}) \coloneqq - \Tr \left[ \rho_{AB} \log_2{\rho_{AB}} \right]$.
However, if any bipartite state does not follow the above condition, we cannot conclude anything about the separability of the state (i.e., it can be either separable or entangled).

Let us denote the reduced subsystems of the states $\lbrace \rho_i \rbrace_i$ by $\lbrace \rho^i_{A/B} \rbrace_i$ where $i \in \lbrace 1, 2 \rbrace$.
In our case, the von-Neumann entropies ($S(\rho^i_{A/B})$) of reduced subsystem of the states $\lbrace \rho_i \rbrace_i$
%,$S(\rho_{A/B})$, 
are unity, 
%where $\rho_{A/B} \coloneqq \Tr_{A/B}[\rho]$, 
whereas the von-Neumann entropies of both states, $\lbrace \rho_i \rbrace_i$, are 
\begin{align*}
   &S(\rho_i) = -p_i\log_{2}[p_i] - \\&\frac{1}{2} (1  + |1 - 2 p_i|)\log_{2} \Bigg[\frac{(1 + |1 - 2 p_i|)}{2}\Bigg],
\end{align*}
for $0 \leq p_i \leq \frac{1}{2}$, and

\begin{align*}
   S(\rho_i) &= -(1-p_i)\log_{2}[1-p_i]\\&- 
\frac{1}{2}(1 + |1 - 2 p_i|)\Bigg[\frac{(1 + |1 - 2 p_i|)}{2}\Bigg],
\end{align*}
for $\frac{1}{2} < p_i \le 1$ where $i \in \lbrace 1, 2 \rbrace$.
We can observe that $S(\rho_i)$ for both the states $\lbrace \rho_i \rbrace_i$ are solely dependent on $p_i$ and not on $\theta$ with $i \in \lbrace 1, 2 \rbrace$. 
We plot the von-Neumaan entropies of both states $(S(\rho_i))$ and the von-Neumaan entropies of the reduced subsystem of both states $(S(\rho^i_{A/B}))$ with respect to $p_i$ in Fig.~\ref{fig:1}.
%~\textcolor{magenta}{[x,y dutoi change krte hbe]}
%this values are independent of theta. 
%Fig.1. depicts the plot  of $S$ vs p for the state $\rho_{m}$ 
It shows that $S(\rho_i) \leq S(\rho^i_{A/B})$ for all $0 \leq p_i \leq 1$ with $p_i = \frac{1}{2}$ being the point of equality where $i \in \lbrace 1, 2 \rbrace$. Therefore, according to Ref.~\cite{ent}, the mixed states, $\lbrace \rho_i \rbrace_i$,
%satisfying the criteria $\sqrt{\alpha_{1}}=\pm\frac{1}{\sqrt{2}},\sqrt{\alpha_{2}}=\pm\frac{1}{\sqrt{2}}$, and $\theta' = \theta + \pi$, $\forall~\theta$ 
will be entangled states for all $ p_i \in \left[0, \frac{1}{2}) \cup ( \frac{1}{2}, 1 \right]$ where $i \in \lbrace 1, 2 \rbrace$. Since the Hilbert space of the composite system is $\mathbb{C}^2 \otimes \mathbb{C}^2$, we can apply PPT criterion~\cite{3,PPT} to detect entanglement in the states, $\lbrace \rho_i \rbrace_i$, at $p_i = \frac{1}{2}$ for all $i \in \lbrace 1, 2 \rbrace$.
%So, p=\frac{1}{2}$ is the only point of interest where there is a chance for the state to be separable. Once this condition is satisfied for $2 \bigotimes 2 \And 2\bigotimes3$ systems we can use the PPT criteria~\cite{PPT} to verify the separability. 
%Applying the PPT criteria the eigenvalues of the partially transposed state, reads as $\{\frac{1}{2} - p, -\frac{1}{2} + p, 1/2 , 1/2\}$. At p=\frac{1}{2}$ the eigenvalues all becomes all positive thus, certifying the separability of the state at that point.\\
Applying PPT criterion, it can be proved that both the mixed states $\lbrace \rho_i \rbrace_i$, for all $i \in \lbrace 1, 2 \rbrace$, are separable at $p_i = \frac{1}{2}$. 
If both the masked states are separable then the only possibility is that these states are identical with both having same $p$ values i.e. $p_1=p_2=0.5$. 
%There is no point of masking identical states. 
The theory of quantum masking demands that the set of states to be masked contains at least two states, our set of masked states
should contain at least one entangled state. This proves that entanglement is an absolute necessity for masking the information contained in two arbitrary single-qubit mixed commuting states.
%.~\textcolor{teal}{[write more thing...such as if p1=p2=0.5, it will be same state, if one of them is 0.5, then also ent is needed...]}
%\textcolor{red}{Consequently masking of information contained in a set, $\lbrace \sigma \rbrace_1$, of single qubit mixed commuting states requires that under the action of the masker each of the states gets entangled with the auxiliary qubit except for p=1/2, where the joint state so obtained is separable. Since the theory of masking demands the set of state to be masked  contains at least two states. Our result  indicates that after masking the set of corresponding masked states, $\lbrace \rho \rbrace_{12}$ would contain at most one separable state, while rest of the state will be all entangled. Hence, no classical correlation solely can mask the information contained in a set of two commuting single qubit mixed state.}
%This, completes our prove that entanglement is necessary for masking information of two arbitrary single-qubit commuting quantum states %\textcolor{blue}{and we already know that entanglement is a necessary ingredient in masking of quantum states.} 
%Therefore, this completes the proof that \textcolor{red}{entanglement is not only necessary but also sufficient for masking of mixed single qubit states}.

\begin{figure}
\centering
\includegraphics[width=\linewidth]{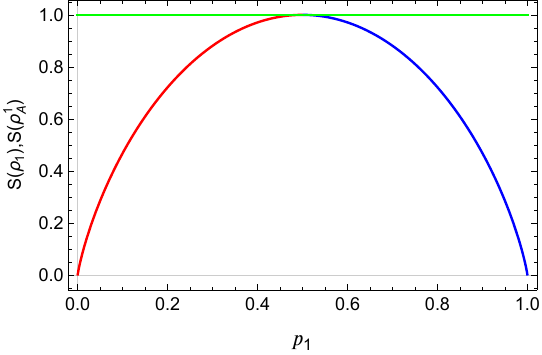}
\caption{\textbf{\textit{Behavior of global and local entropies of masked state, $\rho_1$, with respect to the mixedness $p_1$.}}
%when $\sigma$ is a classical mixture of two single-qubit pure orthogonal states.
The global entropy $S(\rho_1)$ and local entropy $S(\rho^1_{A})$ of the masked state, $\rho_1$, are plotted along the vertical axis, whereas the mixedness, $p_1$, is plotted along the horizontal axis.
%as functions of mixing parameter p (horizontal axis) when the state is a classical mixture of masked states of two single-qubit pure orthogonal states.
Here, the green line represents the entropy of the local reduced density matrix of the masked state $\rho_1$, while the red line shows the global entropy of the masked state $\rho_1$ for $p_1 \leq \frac{1}{2}$ and the blue line for $p > 
\frac{1}{2}$. It implies that the masked state $\rho_1$ is entangled for all $p_1$ except at $p_1 = \frac{1}{2}$. The behaviors of $S(\rho_2)$ and $S(\rho^2_{A})$ for masked state $\rho_2$ with mixedness $p_2$ will be exactly the same as those for $\rho_1$. The vertical axis is in units of bits, and the horizontal axis is dimensionless.}
 %and $S(\rho_1)$ are plotted using the dimension of bits whereas the horizontal axis is dimensionless.}
 \label{fig:1}
 \end{figure}

\end{proof}

The above Theorem~\ref{neccessary-sufficient-commuting} can be generalized to a set of more than two single-qubit mixed commuting states, using the similar argument.
%Here we provide an example to show, numerically, that there doesn't exist any masker of a set of two single-qubit mixed commuting states that could encode their masking information in different classical correlations other than entanglement.
In Appendix~\ref{Ex1}, we present a numerical example to support our findings related to the masking of single-qubit mixed commuting states.

\subsection{Masking of two single-qubit mixed non-commuting states}
Let us now move into another main result of our work, which is that entanglement is a necessary ingredient in the masking of a set consisting of at least two single-qubit mixed non-commuting states. 
First, we will start with a statement on the masking of two arbitrary single-qubit pure non-orthogonal states, which states that

\begin{lemma}
\label{l3}
    Masking of two arbitrary single-qubit pure non-orthogonal states requires the masked states to be equally entangled.
\end{lemma}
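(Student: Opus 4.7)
The plan is to leverage the defining property of a masker---equality of reduced density matrices on both sides---together with the standard fact that the entanglement of a pure bipartite state is completely determined by its reduced state. Let $U$ denote the unitary masker, $\ket{a}$ the auxiliary single-qubit state, and set $\ket{\Psi_{i}} = U(\ket{\psi_{i}}\otimes\ket{a})$ for $i = 1, 2$, where $\ket{\psi_{1}}, \ket{\psi_{2}}$ are the two non-orthogonal single-qubit pure inputs. The masking hypothesis immediately yields $\Tr_{1}[\ket{\Psi_{1}}\bra{\Psi_{1}}] = \Tr_{1}[\ket{\Psi_{2}}\bra{\Psi_{2}}]$ and the analogous identity for $\Tr_{2}$.

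The next step is to invoke the well-known consequence of the Schmidt decomposition: for any pure state $\ket{\Psi}$ on $\mathbb{C}^{2}\otimes\mathbb{C}^{2}$, its Schmidt coefficients are precisely the square roots of the nonzero eigenvalues of either reduced density matrix, and every pure-state entanglement monotone---entropy of entanglement, concurrence, Schmidt rank, etc.---is a symmetric function of these coefficients alone. Since $\ket{\Psi_{1}}$ and $\ket{\Psi_{2}}$ share identical reduced states by the preceding displayed equations, they automatically share identical Schmidt spectra, and hence identical values of any such entanglement measure. This is exactly the content of the claim that the two masked states are equally entangled.

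The proof is essentially a one-line corollary of the masking condition combined with the Schmidt decomposition, so I do not anticipate a genuine obstacle. The only subtle point worth flagging is the contrast with Lemma \ref{pure-orthogonal-mask-lemma}: in the orthogonal case one exploited the fact that a global unitary preserves orthogonality to pin the masked pair down to maximally entangled states, whereas for non-orthogonal inputs this extra constraint is absent and the masked states may realise a whole continuum of entanglement values. What survives in the general setting is precisely the equality $E(\ket{\Psi_{1}}) = E(\ket{\Psi_{2}})$, which depends only on the agreement of marginals guaranteed by masking and not on any structural property of the inputs beyond their sharing a common ancilla.
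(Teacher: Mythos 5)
Your proof is correct, and it takes a genuinely different and more economical route than the paper's. You observe that the masked states are pure (the masker is a unitary acting on a pure product input), that the masking condition forces their marginals to coincide, and that the entanglement of a pure bipartite state is a function of the Schmidt spectrum, i.e., of either marginal alone; equality of entanglement follows immediately, in any dimension and with no assumption on the overlap of the inputs (your argument in fact subsumes the equal-entanglement part of the orthogonal case as well). The paper instead proceeds constructively: it writes the two masked states in a canonical form for non-orthogonal pairs, $\ket{\Sigma_{1}} = \sqrt{t_{0}}\ket{0\tau_{0}}+\sqrt{t_{1}}\ket{1\tau_{1}}$ and $\ket{\Sigma_{2}} = \sqrt{t_{0}}\ket{0\nu_{0}}+\sqrt{t_{1}}\ket{1\nu_{1}}$, imposes the masking condition to force $t_{0}=t_{1}=\tfrac{1}{2}$ and $\theta'=m\pi-\theta$, and only then compares local entropies. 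What the paper's longer computation buys is twofold: it produces the explicit masked states of Eqs.~\eqref{eqn-chi-1-prime} and \eqref{eqn-chi-2-prime}, which are the starting point of the proof of Theorem~\ref{theorem-non-commuting}, and it additionally establishes that the common entanglement is strictly non-zero for all $\theta\in[0,\pi/2)$ --- a fact the lemma's wording does not demand but the paper's narrative uses. So your argument fully proves the lemma as stated, but if it replaced the paper's proof wholesale, the explicit form of the masked states and the non-vanishing of their entanglement would have to be derived separately before Theorem~\ref{theorem-non-commuting}.
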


\begin{proof}
%Like before in Theorem 2, we proceed here by choosing 
Let us suppose that two arbitrary pure non-orthogonal states of the maskable system, $\ket{\Phi_1}$ and $\ket{\Phi_2}$ on $\mathbb{C}^2$, and a state of the auxiliary maskable system, $\ket{c}$, on $\mathbb{C}^2$, are given. 
%The Hilbert spaces of the system and auxiliary system have arbitrary dimensions.
The states of the composite system will be $(\ket{\Phi_1} \otimes  \ket{c})$ and $(\ket{\Phi_2} \otimes \ket{c})$.

%Assume that the masker maps the states of the composite system, $\ket{\Phi_1} \otimes  \ket{b}$ and $\ket{\Phi_2} \otimes \ket{b}$, into two pure non-orthogonal states, $\ket{\phi_1}$ and $\ket{\phi_2}$, respectively. 
According to Ref.~\cite{ortho,nonortho}, one can write any two arbitrary pure non-orthogonal states, $\ket{\phi_1}$ and $\ket{\phi_2}$, on $\mathbb{C}^d \otimes \mathbb{C}^d$ as follows:
\begin{align*}
 \ket{\phi_{1}} &\coloneqq \sum_{k=0}^{d-1} \sqrt{r_k} \ket{\gamma_k \tau_{k}},\\
\ket{\phi_{2}} &\coloneqq \sum_{k=0}^{d-1} \sqrt{s_k}\ket{\gamma_k \nu_{k}},
\end{align*}
where $\lbrace \ket{\gamma_k} \rbrace_k$ forms an orthonormal basis set on the first subsystem of the composite system, and $0 \leq r_k, s_k \leq 1$. $\lbrace \tau_k \rbrace_k$ and $\lbrace \nu_k \rbrace_k$ are normalized, non-orthogonal and
they satisfy the condition: $\sqrt{r_{k}s_{k}}\braket{\tau_{k}} {\nu_{k}} = \sqrt{r_{l}s_{l}}\braket{\tau_{l}} {\nu_{l}}$ for all $k,l \in \lbrace 0,1, \ldots , (d-1) \rbrace$.
%~\cite{ortho} also proclaimed that 
If $\braket{\tau_{k}} {\nu_{k}} = 0$ for all $k$, then $\ket{\phi_1}$ and $\ket{\phi_2}$ will be orthogonal with each other. 
%Note that since we are dealing specifically with non-orthogonal states in this section, $\ket{\phi_1}$ and $\ket{\phi_2}$ must not satisfy $\braket{\tau_{k}} {\nu_{k}}=0$ for all $k$. 

In our scenario, assume that the masker maps the states of the composite system, $(\ket{\Phi_1} \otimes  \ket{c})$ and $(\ket{\Phi_2} \otimes \ket{c})$, into two pure non-orthogonal states, $\ket{\Sigma_1}$ and $\ket{\Sigma_2}$ on $\mathbb{C}^2 \otimes \mathbb{C}^2$, respectively.
%Next consider two non-orthogonal single qubit pure states $\ket{m_1}$ and $\ket{m_2}$.  Suppose the dimension of the auxiliary qubit is also 2. Before the masker acts we have two composite states $\ket{m_1} \otimes \ket{b}$ and  $\ket{m_2} \otimes \ket{b}$ on $\mathbb{C}^2 \otimes \mathbb{C}^2$.  The unitary masker maps this two states in to two non-orthogonal 2-qubit states which are $\ket{\Sigma_1}$ and $\ket{\Sigma_2}$ respectively.
Using the results of Ref.~\cite{ortho,nonortho}, $\ket{\Sigma_1}$ and $\ket{\Sigma_2}$ on $\mathbb{C}^2 \otimes \mathbb{C}^2$ can be expressed as follows:
\begin{align*}
 \ket{\Sigma_{1}} &\coloneqq \sqrt{t_{0}}\ket{0\tau_{0}}+\sqrt{1-t_{0}}\ket{1\tau_{1}},\\
 \ket{\Sigma_{2}} &\coloneqq \sqrt{t_{1}}\ket{0\nu_{0}}+\sqrt{1-t_{1}}\ket{1\nu_{1}},
\end{align*}
with $\sqrt{t_0t_1}\braket{\tau_{0}} {\nu_{0}} = \sqrt{(1-t_0)(1-t_1)}\braket{\tau_{1}} {\nu_{1}}$ and $0 \leq t_0, t_1 \leq 1$. $\lbrace \ket{0}, \ket{1} \rbrace$ represents the standard computational basis.
%and $\braket{\tau_{k}}{\nu_{k}}\neq 0$ for all $k=0,1$.
Here, without loss of generality, we can consider 
%orthonormal basis set on the first party side to be the computational basis, i.e $\lbrace{\ket{0},\ket{1}}\rbrace$ and 
the following forms of $\ket{\tau_{0}}, \ket{\tau_{1}}, \ket{\nu_{0}}$, and $\ket{\nu_{1}}$
\begin{align*}
\ket{\tau_{0}}&\coloneqq \ket{0}, \\
\ket{\tau_{1}} &\coloneqq \cos{\Theta_{1}}\ket{0}+\sin{\Theta_{1}}\ket{1},\\
\ket{\nu_{0}} &\coloneqq \cos{\Theta_{2}}\ket{0}+e^{i\varphi_{1}}\sin{\Theta_{2}}\ket{1}, \hspace{5 mm} \\
\ket{\nu_{1}} &\coloneqq \cos{\Theta_{3}}\ket{0}+e^{i\varphi_{2}}\sin{\Theta_{3}}\ket{1},
\end{align*}
 %t_0$ and $t_1$ are normalisation constants and 
 %$\braket{\tau_{k}}{\nu_{k}}\neq 0$ for all $k=0,1 $,
%\And \braket{\nu_{i}}{\nu_{j}}\neq 0$
with $\Theta_{1,2,3} \in [0,\pi]$ and $\varphi_{1,2} \in [0,2\pi]$. 
%and $\varphi_2\in [0,2\pi]$
%such that they satisfy the condition $\braket{\tau_{1}}{\nu_{1}}=\braket{\tau_{0}}{\nu_{0}}$.
%putting this in equation 28 and 29.
%The pure non-orthogonal states takes the form.
%\begin{align*}
%\chi_{1}&=\sqrt{t_{0}}\ket{0(\cos{\frac{\theta}{2}}\ket{0}+\sin{\frac{\theta}{2}}\ket{1})} \\&+\sqrt{t_{1}}\ket{1(\sin{\frac{\theta}{2}}\ket{0}+\cos{\frac{\theta}{2}}\ket{1})},\\
% \chi_{2}& = \sqrt{t_{0}}\ket{0(\cos{\frac{\theta'}{2}}\ket{0}+\sin{\frac{\theta'}{2}}\ket{1})}\\&+\sqrt{t_{1}}\ket{(1(\sin{\frac{\theta'}{2}}\ket{0}+\cos{\frac{\theta'}{2}}\ket{1})},
%\end{align*}

 It can be shown that the masking condition, $\Tr_{A/B}[\ket{\Sigma_{1}}\bra{\Sigma_{1}}) = \Tr_{A/B} [\ket{\Sigma_{2}}\bra{\Sigma_{2}}]$ can only be fulfilled if any of the following conditions are satisfied:

 \begin{itemize}
    \item \textbf{Condition 1:} $t_0=t_1= \frac{1}{2}, \varphi_1=\varphi2=0, \Theta_1=\frac{\pi}{2}$, and $\Theta_2=\Theta_3 - \frac{\pi}{2}$.\\
 The masked states under such condition will be,
 \begin{align}
     \ket{\Sigma^{1}_1} &\coloneqq \pm (1/\sqrt{2})(\ket{00}+\ket{11}, \label{condition-1-1}\\
     \ket{\Sigma^{1}_2} &\coloneqq \pm (1/\sqrt{2})(\ket{0\boldsymbol{\nu_0}}+\ket{1\boldsymbol{\nu^\perp_0}} \label{condition-1-2},
 \end{align}
where $\ket{\boldsymbol{\nu_0}} \coloneqq \cos{\boldsymbol{\Theta}}\ket{0}+ \sin{\boldsymbol{\Theta}}\ket{1}$ with $\boldsymbol{\Theta} \in [0, \pi]$. The state $\ket{\boldsymbol{\nu_0}^\perp}$ is orthogonal to the state $\ket{\boldsymbol{\nu_0}}$.
At $\boldsymbol{\Theta} = 0, \pi$, both the masked states are essentially the same. Hence, $\boldsymbol{\Theta} = 0,\pi$ is not being considered for masking. Since at $\boldsymbol{\Theta} = \frac{\pi}{2}$ the masked states will be orthogonal with each other, $\boldsymbol{\Theta} = \pi/2$ should also not be considered. So, we would like to note here that in such a scenario, $\boldsymbol{\Theta}$ can only be in the range $\boldsymbol{\Theta} \in (0, \frac{\pi}{2}) \cup (\frac{\pi}{2}, \pi)$. 
%Consequently, the range of $\theta_3$ is such that $\pi/2<\theta_3<\pi$.\\

\item \textbf{Condition 2:-} $t_0 = t_1 = \frac{1}{2}, \varphi_1 = 0, \Theta_2 = \Theta_1,
\Theta_3 = 0$, and $\varphi_2 \in [0,2\pi]$. \\
Under this condition, the masked states will be
\begin{align}
\ket{\Sigma^{2}_1} &\coloneqq \pm (1/\sqrt{2})(\ket{00}+\ket{1\boldsymbol{\tau_1}}, \label{condition-2-1}\\
\ket{\Sigma^{2}_2} &=\pm (1/\sqrt{2})(\ket{0\boldsymbol{\tau_1}}+\ket{10}, \label{condition-2-2}
 \end{align}
where $\ket{\boldsymbol{\tau_1}} \coloneqq \cos{\boldsymbol{\Theta}}\ket{0}+\sin{\boldsymbol{\Theta}}\ket{1}$ with $\boldsymbol{\Theta} \in [0, \pi]$. For the same reason as \textbf{Condition 1}, here the range of $\boldsymbol{\Theta}$ will be $\boldsymbol{\Theta} \in (0, \frac{\pi}{2}) \cup (\frac{\pi}{2}, \pi)$.

\end{itemize}

%\textbf{Condition 3:-  $t_0=t_1=1/2, \phi_1=\pi, \theta_2=-\theta_1,
%\theta_3=0, \phi_2 \in [0,2\pi]$ }\\.
%In such a scenario also the two masked states becomes.
%\begin{align}
    % \ket{\Sigma^{3}_1} &=\pm (1/\sqrt{2})(\ket{00}+\ket{1\tau_1},\\
     %\ket{\Sigma^{3}_2} &=\pm (1/\sqrt{2})(\ket{0\tau_1}+\ket{10}.
     %\label{3}
 %\end{align}}

%$\sqrt{t_{0}}=\pm \frac{1}{\sqrt{2}},\sqrt{t_{1}}=\pm \frac{1}{\sqrt{2}}$ and $\theta'=m\pi-\theta$ occurs with $m=1,3,5...$ being an odd integer.
%, where $\Tr_{1/2}$ denotes a partial trace over first or second party.
%Thus, we can see that the masking condition is readily satisfied when the masked states are of the form
%\begin{align}
%\ket{\Sigma'_1} &= \pm\sqrt{\frac{1}{2}} (\ket{0\tau_{0}} + \ket{1\tau_{1}}), \label{eqn-chi-1-prime}\\ 
%\ket{\Sigma'_2} &= \pm \sqrt{\frac{1}{2}} (\ket{0\tau_{1}} + \ket{1\tau_{0}}).\label{eqn-chi-2-prime}
%\end{align}
Since $\ket{\Sigma^{1,2}_1}$ and $\ket{\Sigma^{1,2}_2}$ are two-qubit pure states, we can use the local von Neumann entropy as a measure of entanglement to measure the entanglement of these masked states. 
Clearly, $S(\Tr_{A/B}[ \ket{\Sigma^{1}_1} \bra{\Sigma^{1}_1}]) = S(\Tr_{A/B}[ \ket{\Sigma^{1}_2} \bra{\Sigma^{1}_2}])= 1 $ and $S(\Tr_{A/B}[ \ket{\Sigma^{2}_1} \bra{\Sigma^{2}_1}]) = S(\Tr_{A/B}[ \ket{\Sigma^{2}_2} \bra{\Sigma^{2}_2}]) < 1$
%~\textcolor{teal}{[check kor ata ki $\leq 1$ na $< 1$?]}
%=(\abs{\cos{\theta_1}/}/2) \log_2(-1+2/(1+\abs{\cos{\theta_1}}))+ \log_2(4\csc^2{\theta_1})/2 $ 
for all $\boldsymbol{\Theta} \in (0, \frac{\pi}{2}) \cup (\frac{\pi}{2}, \pi)$. I.e., the entanglement content of the states, $\ket{\Sigma^{1,2}_1}$ and $\ket{\Sigma^{1,2}_2}$, are non-zero and equal when $\boldsymbol{\Theta} \in (0, \frac{\pi}{2}) \cup (\frac{\pi}{2}, \pi)$.
Hence, the masked states obtained from $\textbf{Condition 1}$ are both maximally entangled, and the masked states obtained from $\textbf{Condition 2}$ are equally entangled.
This concludes the proof. 
\end{proof}
This is the ideal moment to investigate the role of entanglement in the masking of two arbitrary single-qubit mixed non-commuting states.
%, as described in the following theorem.

\begin{theorem}
\label{theorem-non-commuting}
%Two party masking states formed from the classical mixture of two pure non-orthogonal states remains entangled unless the states are equally mixed.
Entanglement in masked states is necessary to mask two arbitrary single-qubit mixed non-commuting states.
\end{theorem}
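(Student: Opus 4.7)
The plan is to mirror the architecture of Theorem \ref{neccessary-sufficient-commuting} almost line for line, since the preceding Lemma on masking two pure non-orthogonal single-qubit states already supplies the analogue of Lemma \ref{pure-orthogonal-mask-lemma} in the non-commuting setting. Necessity is inherited from the pure-state no-go results already cited, so only sufficiency needs real work: I would exhibit a unitary masker that masks an arbitrary pair of single-qubit mixed non-commuting states and produces entangled outputs for all mixing parameters except the balanced one.

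To set this up, I would observe that any two single-qubit mixed non-commuting states $\sigma_1,\sigma_2$ lie, up to the freedom of choosing their common spectral line, on the chord joining two pure non-orthogonal states $\ket{\Phi_1}$ and $\ket{\Phi_2}$, so by Corollary \ref{sufficient-corollary} every masker of the pure endpoints automatically masks every convex combination. The preceding Lemma then forces the masker to send $\ket{\Phi_1},\ket{\Phi_2}$ to the equally entangled states $\ket{\Sigma'_1},\ket{\Sigma'_2}$ of Eqs. \eqref{eqn-chi-1-prime}--\eqref{eqn-chi-2-prime}, and therefore the generic mixed input $\sigma = p\ket{\Phi_1}\bra{\Phi_1}+(1-p)\ket{\Phi_2}\bra{\Phi_2}$ is mapped to
\begin{equation*}
\rho \;=\; p\,\ket{\Sigma'_1}\bra{\Sigma'_1} + (1-p)\,\ket{\Sigma'_2}\bra{\Sigma'_2}, \qquad 0\le p\le 1.
\end{equation*}
The task reduces to showing that $\rho$ is entangled whenever $p\neq 1/2$ and separable at $p=1/2$.

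I would attack the entanglement side with the Horodecki entropic criterion $S(\rho_{1/2})>S(\rho)$, exactly as in Theorem \ref{neccessary-sufficient-commuting}. The local reduced states $\rho_1$ and $\rho_2$ are immediately computable from Eqs. \eqref{eqn-chi-1-prime}--\eqref{eqn-chi-2-prime}: on the ancilla side the swap $\ket{\tau_0}\leftrightarrow\ket{\tau_1}$ between the two branches is a local unitary, while on the $\ket{\lambda}$-side the two branches already project onto orthogonal components, so I expect $\rho_{1/2}$ to reduce to the maximally mixed qubit with $S(\rho_{1/2})=1$ independently of $\theta$ and $p$. The main technical obstacle is the global entropy $S(\rho)$: because $\ket{\Sigma'_1}$ and $\ket{\Sigma'_2}$ are \emph{not} orthogonal, their overlap $\braket{\Sigma'_1}{\Sigma'_2}$ enters the spectrum nontrivially. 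I would diagonalise the $2\times 2$ Gram block of $\rho$ in the $\{\ket{\Sigma'_1},\ket{\Sigma'_2}\}$ span to obtain its two nonzero eigenvalues in closed form, using the inner products of the $\ket{\tau_i}$ that the preceding Lemma pins down, and then verify $S(\rho)<1$ strictly for $p\neq 1/2$. This is where I would expect the heaviest bookkeeping, and it is the place where numerical support (as in the subsection preceding the theorem) would be most useful as a sanity check.

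Finally, the entropic criterion is inconclusive at $p=1/2$, so I would handle that point separately via the PPT criterion on $\mathbb{C}^2\otimes\mathbb{C}^2$. At the balanced mixture the symmetry between the two branches (again the $\ket{\tau_0}\leftrightarrow\ket{\tau_1}$ swap) renders $\rho$ invariant under a local involution, and a direct computation of the partial transpose should reveal only non-negative eigenvalues, giving separability precisely at $p=1/2$. Combining the two regimes establishes sufficiency: every non-trivial mixture must be masked into an entangled state, so together with the already known necessity of entanglement for masking, this completes the proof of Theorem \ref{theorem-non-commuting}.
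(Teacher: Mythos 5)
Your proposal follows the paper's own architecture almost exactly: reduce to mixtures of the pure masked states $\ket{\Sigma'_1},\ket{\Sigma'_2}$ via Lemma~\ref{masking_mixed_from_pure}, invoke the Horodecki entropic criterion to get entanglement for $p\neq\tfrac12$, and settle $p=\tfrac12$ with PPT. The PPT step and the overall logic match. But there is one concrete error in the middle. For non-orthogonal inputs the marginals are \emph{not} maximally mixed: tracing $\ket{\Sigma'_1}\bra{\Sigma'_1}$ over the second party leaves an off-diagonal term proportional to $\langle\tau_1|\tau_0\rangle=\sin\theta$, so
\begin{equation*}
\Tr_2\bigl[\ket{\Sigma'_1}\bra{\Sigma'_1}\bigr]=\tfrac12\bigl(I+\sin\theta\,\sigma_x\bigr),
\end{equation*}
whose entropy is strictly less than $1$ for every $\theta\in(0,\pi/2]$. (The local unitary swapping $\ket{\tau_0}\leftrightarrow\ket{\tau_1}$ shows the two masked pure states are \emph{equally} entangled, which is all the preceding Lemma claims, not that they are maximally entangled.) Consequently your stated target, ``verify $S(\rho)<1$ strictly for $p\neq1/2$,'' is not sufficient to conclude entanglement; you must establish the $\theta$-dependent inequality $S(\rho)<S(\rho_{1/2})$ with the correct, sub-unit right-hand side.

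That said, the gap is repairable within your own framework: the Gram-block diagonalisation you propose would give $S(\rho)$ in closed form, and one would then compare it against the correct marginal entropy. Be aware that the paper itself does not carry out this comparison analytically --- it only plots $\delta S=S(\kappa')-S(\Tr_1[\kappa'])$ over the $(p,\theta)$ plane (Fig.~\ref{fig:2}) and reads off that $\delta S<0$ away from $p=\tfrac12$ --- so an actual closed-form verification would be a strengthening of the published argument. You should also explicitly exclude $\theta=\pi/2$, where the two input states coincide, the masked states become product states, and masking is vacuous; the paper sets this case aside for exactly that reason.
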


 \begin{figure}
     \centering
\includegraphics[width=\linewidth]{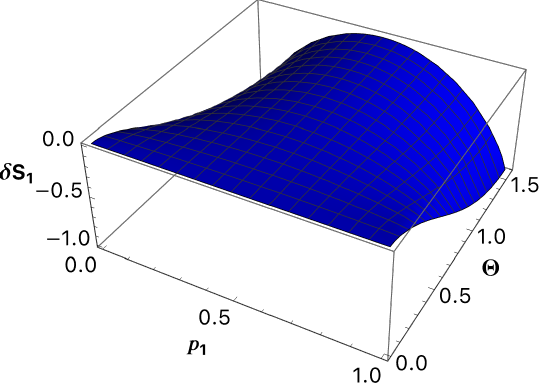}
     \caption{
     \textbf{\textit{Behavior of difference between global and local entropies $(\boldsymbol{\boldsymbol{\delta S_1}})$ of masked state, $\kappa_1$, against $\boldsymbol{p_1}$ and $\boldsymbol{\Theta}$.}}
     %Difference between entropy of the entire system and that of the individual subsystem.
     The discrepancy between global and local entropies of state $\kappa_1$, $\boldsymbol{\delta S_1}\coloneqq S(\kappa_{1})-S(\Tr_{A}[\kappa_{1}])$, is plotted along the vertical axis, and mixing parameters $\boldsymbol{p_1}$ and $\boldsymbol{\Theta}$ along the horizontal axes. $\boldsymbol{\boldsymbol{\delta S_1}}$ is symmetric about $\boldsymbol{\Theta} = \frac{\pi}{2}$. $\boldsymbol{\delta S_2}$ has exactly the same quantitative nature as $\boldsymbol{\delta S_1}$ against $\boldsymbol{p_2}$ and $\boldsymbol{\Theta}$.
     %for the masked states $\kap$
     %for mixtures of two single-qubit pure non-commuting states. 
     The vertical axis is in bits, while the horizontal axes is dimensionless (for $\boldsymbol{p_1}$) and is in unit of radian (for $\boldsymbol{\Theta}$).}
     %at various values of $\theta$ in the range $\theta \in [0,\frac{\pi}{2})$.}

     \label{fig:2}
 \end{figure}

\begin{proof}
Any arbitrary two single-qubit mixed non-commuting states can be realized by different convex mixtures of the same two single-qubit pure non-orthogonal states. Let us consider that the two single-qubit pure non-orthogonal states are $\ket{\boldsymbol{\Sigma_1}}$ and $\ket{\boldsymbol{\Sigma_2}}$. The convex mixtures of these two pure states can be represented as
\begin{align*}
\boldsymbol{\kappa} &\coloneqq \boldsymbol{p} \ket{\boldsymbol{\Sigma_1}} \bra{\boldsymbol{\Sigma_1}} + (1-\boldsymbol{p}) \ket{\boldsymbol{\Sigma_2}} \bra{\boldsymbol{\Sigma_2}},
\end{align*}
with $0\leq \boldsymbol{p} \leq 1$. Here, our goal is to mask a set of two single-qubit mixed non-commuting states of the maskable system, let's say $\boldsymbol{\kappa_1}$ and $\boldsymbol{\kappa_2}$, parameterized by $\boldsymbol{p_1}$ and $\boldsymbol{p_2}$, respectively. An arbitrary single-qubit pure state in the auxiliary system, $\ket{d}$, is given. 
%From Corollary \ref{sufficient-corollary}, we can conclude that a quantum masker of two arbitrary single-qubit pure non-orthogonal (non-commuting) states, $\ket{m_1}$ and $\ket{m_2}$, can simultaneously mask classical mixtures of these states as well.~\textcolor{teal}{[ata ki thik ache?]}

In the similar way as Theorem~\ref{neccessary-sufficient-commuting}, from Corollary \ref{sufficient-corollary} and Lemma $\ref{l3}$, we can say that if any masker will mask the set of states $\lbrace \boldsymbol{\kappa}_q \rbrace_q$ with $q \in \lbrace 1, 2 \rbrace$ considering the state $\ket{d}$ as an auxiliary state, then the masked states will be
\begin{align*}
    \kappa_q &\coloneqq \boldsymbol{p_q} \ket{\Sigma^g_q} \bra{\Sigma^{g}_q} + (1-\boldsymbol{p_q}) \ket{\Sigma^{g}_q} \bra{\Sigma^g_q},
\end{align*}
with $0 \leq \boldsymbol{p_q} \leq 1$ and $q,g \in \lbrace 1,2 \rbrace$. The mathematical expressions of $\ket{\Sigma^g_1}$ and $\ket{\Sigma^g_2}$ with $g \in \lbrace 1,2 \rbrace$ are described in Eqs.~\eqref{condition-1-1},~\eqref{condition-1-2},~\eqref{condition-2-1}, and~\eqref{condition-2-2}.

%the masked states $\ket{\Sigma^1_1},\ket{\Sigma^1_2}$ so obtained are maximally entangled states. Hence, one can proceed in a similar fashion as in Theorem~\ref{neccessary-sufficient-commuting}. This suggests that whenever the masker masks the two non-orthogonal states into maximally entangled states.
Note that when $g=1$, both the masked states will be entangled except at a single point, i.e., $\boldsymbol{p_{1,2}} = \frac{1}{2}$, and at $\boldsymbol{p_{1,2}} = \frac{1}{2}$, the states will be separable using the same reason described in Theorem~\ref{neccessary-sufficient-commuting}. 
%with a similar argument as in Theorem~\ref{neccessary-sufficient-commuting}, one can conclude that 
Thus, entanglement in masked states is necessary for masking the information encoded in a set of two single-qubit mixed non-commuting states when $g = 1$.
%ever condition 1 is satisfied.

Now, we are left with the case only where $g=2$.
%where the masked states so obtained are equally entangled but not necessarily maximally entangled. We question that is entanglement still an absolute necessity even in such scenarios.
In this case, we will find $\boldsymbol{\delta S_q} \coloneqq S(\kappa_{q})-S(\Tr_{A/B}[\kappa_{q}])$ at each $\boldsymbol{p_q}$ and $\boldsymbol{\Theta}$ for $q \in \lbrace 1,2 \rbrace$.  
%We observe that $\delta S_q < 0$ with $q \in \lbrace 1,2 \rbrace$ for $ \boldsymbol{p} \in [0, \frac{1}{2}) \cup ( \frac{1}{2}, 1]$ and $\boldsymbol{\Theta} \in (0,\pi)$ ~\textcolor{teal}{[thik ache?]} . 
%and also that $\delta S$ is symmetric about $\theta_{1}=\pi/2$. 
To do so, we plot $\boldsymbol{\delta S_{1,2}}$ against $\boldsymbol{p_{1,2}}$ and $\boldsymbol{\Theta}$ in Fig.~\ref{fig:2}.  Notice in Fig.~\ref{fig:2} that $\boldsymbol{\delta S_{1,2}} < 0$ for $ \boldsymbol{p_{1,2}} \in [0, \frac{1}{2}) \cup ( \frac{1}{2}, 1]$ and $\boldsymbol{\Theta} \in (0, \frac{\pi}{2}) \cup (\frac{\pi}{2}, \pi)$.
%$\delta S < 0$ for p $\in [0, \frac{1}{2}) \cup ( \frac{1}{2}, 1]$, $\theta_1 \in (0, \frac{\pi}{2})$.
However, at $\boldsymbol{p_{1,2}} = \frac{1}{2}$, $\boldsymbol{\delta S_{1,2}} = 0$ for every value of $\boldsymbol{\Theta} \in (0, \frac{\pi}{2}) \cup (\frac{\pi}{2}, \pi)$.
%$0\leq p \leq 1$ and $0 \leq \theta < \pi/2$ except at p = \frac{1}{2}, 0 \leq \theta \leq \pi/2$ and $\theta = \pi/2, 0 \leq p \leq 1$.
Hence, from the Horodecki \textit{et. al.} criteria~\cite{3} we can conclude that the masked states $\kappa_{1,2}$ are entangled for all $\boldsymbol{p_{1,2}} \in [0, \frac{1}{2}) \cup ( \frac{1}{2}, 1] $ and $\boldsymbol{\Theta} \in (0, \frac{\pi}{2}) \cup (\frac{\pi}{2}, \pi)$. 
We can find the entanglement content of the masked states $\kappa_{1,2}$ at $\boldsymbol{p_{1,2}} = \frac{1}{2}$ when $g=2$ as follows. At $\boldsymbol{p_{1,2}} = \frac{1}{2}$, determinants of the partially transposed matrix of $\kappa_{1,2}$, $\det(\kappa_{1,2}^{T_B})$, are zero for all values of $\boldsymbol{\Theta} \in (0, \frac{\pi}{2}) \cup (\frac{\pi}{2}, \pi)$, where $T_B$ denotes partial transpose over the second subsystem of the composite system. Hence, the masked states $\kappa_{1,2}$ are separable at $\boldsymbol{p_{1,2}} = \frac{1}{2}$ for all values of $\boldsymbol{\Theta} \in (0, \frac{\pi}{2}) \cup (\frac{\pi}{2}, \pi)$ as it satisfies the condition of separability for mixed states~\cite{s2,s1}.
Therefore, the masked states will be entangled for all $\boldsymbol{p_{1,2}}$ except $\boldsymbol{p_{1,2}} = \frac{1}{2}$ and at $\boldsymbol{p_{1,2}} = \frac{1}{2}$, the masked states are separable when $g=2$ is satisfied. 
%Again by similar argument provided in theorem~\ref{neccessary-sufficient-commuting}
%Since we all know that we require at least two states for quantum masking, entanglement is a necessary ingredient for the masking of any quantum state. Thus, from the above statement,
%Thus, entanglement is necessary for the masking of two single-qubit mixed non-commuting states when condition 2 is satisfied.
This establishes that entanglement in masked states is necessary for masking information contained in a set of any two arbitrary single-qubit mixed non-commuting states when $g = 2$.

It ends the proof.
\end{proof}

The aforementioned Theorem~\ref{theorem-non-commuting} can be extended to a larger set of states, meaning the necessary criterion also applies when considering more than two single-qubit mixed non-commuting states.
Moreover, we provide a numerical example to support Theorem~\ref{theorem-non-commuting} in Appendix~\ref{Ex2}.

\section{conclusion}
\label{sec-4}

Masking of quantum information is a phenomenon by which information contained in a set of states is distributed in the correlation between the states and auxiliary states. The no-masking theorem restricts the masking of arbitrary qudit states in a bipartite scenario. It is always possible to mask a set of states situated on a circle of the Bloch sphere sliced by a plane with a unique masker. In this work, we considered two scenarios: One, masking a set consisting of two mixed states
%two such pure states on the spherical circle, and consecutively of their mixtures. We considered the masking of mixed states
formed by the linear combination of two single-qubit pure commuting states, and another, masking of a set consisting of two mixed states, but they are classical mixtures of two single-qubit pure non-commuting states. The aim was to find the role of entanglement in the masking of the states in both cases. Thus, we mainly addressed the following question in the paper:
%which we begin with was, 
``Is masking of mixed states possible if the masked states are separable?".

%Our results have given a negative answer to this question: 
Entanglement is a fundamental requisite for the masking of mixed states, irrespective of whether they are commuting or not. In both scenarios, we found that the mixed state lying at the middle of the line joining the pure masked states is the only separable state in the set of masked states, while all else are entangled, for any masker. As masking is defined for a set consisting of at least two states, we can conclude that entanglement is necessary for 
%the only type of correlation that helps in 
masking the information of an arbitrary set of states. 
%Analytical proofs for the stated theorems is obtained which as a whole provides the condition required to be satisfied for masking of mixed states and also shows under what conditions the masked mixed state can be separable. The results were also numerically realised. 
%With the aid of optimisation technique .We not only  have found the masker that minimises the entanglement content over a set of  masked mixed states.  The numerical results agreed well with the analytical findings, showing that the masked state formed by equally mixing the pure orthogonal or non-orthogonal states has zero entanglements.

\section*{Acknowledgment}
We acknowledge partial support from the Department of Science and Technology, Government of India through the QuEST grant (grant number DST/ICPS/QUST/Theme3/2019/120).

\bibliography{Maskref}

%\begin{appendix}

%\section{Extending the proof for arbitrary phase}

\section*{Appendix}
%\label{appendix-non-orthogonal-masking-2nd-proof}

\begin{appendix}

\subsection{Numerical verification of Theorem \ref{neccessary-sufficient-commuting}}
\label{Ex1}
Here we will provide a numerical verification of Theorem \ref{neccessary-sufficient-commuting}. We try to find the existence of separable masked states in masking a set of mixed states formed by the convex mixtures of two single-qubit pure orthogonal states, $\epsilon_{1} \coloneqq \ket{0}$ and $\epsilon_{2} \coloneqq \ket{1}$, where $\ket{0}$ and $\ket{1}$ denote the eigenstates of $\sigma_z$ with eigenvalues $1$ and $-1$, respectively.

\begin{figure}
     \centering
\includegraphics[width=\linewidth]{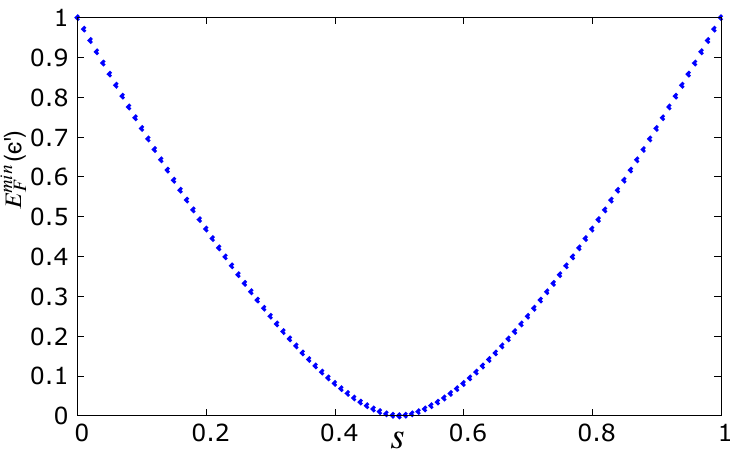}
\caption{\textbf{\textit{Behavior of entanglement ($E_F^{\textnormal{min}})$ of masked state ($\epsilon'$), corresponding to the optimized masker that minimizes the entanglement of formation of the set $\{\epsilon'\}$ with respect to the mixing parameter $s$}}. We use the entanglement of formation as a measure of entanglement of a state here. $\epsilon' \in \mathscr{C}'$ is the masked state corresponding to the mixed state formed by a convex mixture of pure orthogonal states $\ket{0}$ and $\ket{1}$. The entanglement of masked state $\epsilon'$ depends on the masker $\mathscr{N}_3$. 
We plot the entanglement of formation of masked state $\epsilon'$ minimized over all the maskers satisfying Eqs.~\eqref{numerical-commuting-masking-eqn} and~\eqref{epsilon} against the mixing parameter $s$.
%the mixing parameter, $s$, along the horizontal axis while the minimum entanglement of formation $\text{E}_{\text{F}}$ of the masked state belonging to the set $\mathscr{C}'$ along the vertical axis when the unitary $U$ satisfy Eqs. \eqref{numerical-commuting-masking-eqn} simultaneously. 
The nature of entanglement of formation of state $\epsilon'$ with $s$ implies that any masker could not encode the masking information contained in a set of two mixed commuting states formed by the pure states $\ket{0}$ and $\ket{1}$ into a correlation other than entanglement. Thus, it supports our analytical results, Lemma~\ref{pure-orthogonal-mask-lemma} and Theorem~\ref{neccessary-sufficient-commuting}. The vertical axis is in ebits, while the horizontal axis is dimensionless.}
\label{fig:3}
\end{figure}
%$\theta_{1}=2.35619,\theta_{2}=3.14159,\phi_{1}=6.18356;,\phi_{2}=3.1777,\psi_{1}=11.3876,\psi_{2}=4.47234;,\alpha_{x}=0.785398,\alpha_{y}=1.5708,\alpha_{z}=0.835209$}.

We begin with the general form of any two-qubit unitary operator. Any general two-qubit unitary operator~\cite{u1} can be written in the following form:
\begin{equation*}
U = U_{A}\otimes U_{B} U_{d} V_{A}\otimes V_{B},
\end{equation*}
where $U_{d}$ is the entanglement generating gate and it is given by
\begin{equation*}
   U_{d} \coloneqq \exp{(-i\alpha_{x}\sigma_{x}\otimes\sigma_{x}-i\alpha_{y}\sigma_{y}\otimes\sigma_{y}-i\alpha_{z}\sigma_{z}\otimes\sigma_{z},
    )},
\end{equation*}
where $\alpha_x$, $\alpha_y$, and $\alpha_z$ are real numbers with $\alpha_{x,y,z}\in{[0,\frac{\pi}{2}]}$ and $\lbrace \sigma_{j} \rbrace_j$ for $j \in \lbrace x,y,z \rbrace$ represent the Pauli spin matrices.
Here, $V_{A}$, $V_{B}$, $U_{A}$, and $U_{B}$ are the local unitary operations that cannot destroy or create entanglement. Since we are interested in measuring entanglement of states
%keeping the input states fixed on which unitary operator acts 
throughout our numerical calculation, so considering the operation $V_{A}\otimes V_{B}$ is enough as a local operation in the form of $U$.
%This operation changes the basis set thereby contributing to entanglement detection. But, once the entanglement is generated the operation $U_{A}\otimes U_{B}$ can no way modify it.
Hence, $U$ can be written as
\begin{align}
    U = U_{d}V_{A}\otimes V_{B}, \label{eqn-unitary-form}
\end{align}
and we will use the above form of unitary to find the masker of the mixed states constructed by pure states, $\ket{0}$ and $\ket{1}$, in the following calculation.
The local unitaries, $V_{A}$ and $V_{B}$, can be written in general as follows:

\begin{equation*}
    V_{A}=\begin{bmatrix}
    \cos{\frac{\theta_{1}}{2}} e^{\frac{i(\psi_{1} +\phi_{1})}{2}}&  \sin{\frac{\theta_{1}}{2}}e^{\frac{-i(\psi_{1} -\phi_{1})}{2}} &\\
     -\sin{\frac{\theta_{1}}{2}}e^{\frac{i(\psi_{1} -\phi_{1})}{2}} & \cos{\frac{\theta_{1}}{2}}e^{\frac{-i(\psi_{1} +\phi_{1})}{2}}    
  \end{bmatrix},
  \end{equation*}
  \begin{equation*}
    V_{B}=\begin{bmatrix}
    \cos{\frac{\theta_{2}}{2}} e^{\frac{i(\psi_{2} +\phi_{2})}{2}}&  \sin{\frac{\theta_{2}}{2}}e^{\frac{-i(\psi_{2} -\phi_{2})}{2}} &\\
     -\sin{\frac{\theta_{1}}{2}}e^{\frac{i(\psi_{2} -\phi_{2})}{2}} & \cos{\frac{\theta_{2}}{2}}e^{\frac{-i(\psi_{2} +\phi_{2})}{2}} 
     \end{bmatrix},
\end{equation*}
where $ \theta_{1,2} \in {[0,\pi]}$,$\phi_{1,2}\in{[0,2\pi]}$, and $\psi_{1,2}\in{[0,4\pi)}$.
%The unitary $U$ have total 9 free parameters which are bounded as $ \theta_{1,2} \in {[0,\pi]},\phi_{1,2}\in{[0,2\pi]}, \alpha_{x,y,z}\in{[0,\frac{\pi}{2}]}, \psi_{1,2}\in{[0,4\pi)}$.

Let us consider that
%two single-qubit pure orthogonal states in the system, $\epsilon_{1} \coloneqq \ket{0}\bra{0}$ and $\epsilon_{2} \coloneqq \ket{1}\bra{1}$, 
the state in the auxiliary system is $\ket{0}$. 
The convex mixtures of the two states, $\epsilon_1$ and $\epsilon_2$, can be written as
\begin{equation*}
    \epsilon \coloneqq s\epsilon_{1}+(1-s)\epsilon_{2},
\end{equation*}
with $0 \leq s \leq 1$. 
Suppose that all the mixed states situated in the line joining two pure states, $\epsilon_1$ and $\epsilon_2$, form a set $\mathscr{C}$, which is parameterized by $s$, and the masker, let's say $\mathscr{N}_3$, masks the set of two single-qubit pure states, $\epsilon_1$ and $\epsilon_2$. Hence, we will have
%, by a unitary masker $\mathscr{N}_3$, one requires
\begin{align}
\label{numerical-commuting-masking-eqn}
   \Tr_{A/B}[\mathscr{N}_3 (\epsilon_{1} \otimes \ket{0}\bra{0}) \mathscr{N}_3^\dagger] &= \Tr_{A/B} [ \mathscr{N}_3 (\epsilon_{2} \otimes \ket{0}\bra{0}) \mathscr{N}_3^\dagger].
\end{align}
%where $\Tr_{1/2}$ denotes partial trace over 1st party or 2nd party.

From Lemma \ref{masking_mixed_from_pure}, we can conclude that the masker $\mathscr{N}_3$ will also mask the quantum information encoded in the set $\mathscr{C}$. 
The masker $\mathscr{N}_3$ maps the states of composite system from $\lbrace (\epsilon \otimes \ket{0}\bra{0}) \rbrace$ to $\lbrace \epsilon' \rbrace$, i.e.,
\begin{align}
\label{epsilon}
\epsilon' &\coloneqq \mathscr{N}_3 (\epsilon \otimes \ket{0}\bra{0}) \mathscr{N}_3^\dagger. 
   %\rho' &= p \rho_{1}'+(1-p)\rho_{2}'
\end{align}
We define the new set formed by the states $\lbrace \epsilon' \rbrace$ as $\mathscr{C}'$. 
\\
We use the entanglement of formation as an entanglement measure~\cite{EF} for any two-qubit state. Entanglement of formation, $E_{F}(\rho)$ for a state $\rho$, is defined as 
\begin{equation}
E_{F}(\rho) \coloneqq h(\frac{1+\sqrt{1-(\mathbb{C}(\rho))^2}}{2}),
\nonumber
\end{equation}
%\begin{equation}
%\mathbb{E_{F}(C)}=h(\frac{1+\sqrt{1-(\mathbb{C}(\rho))^2}}{2});
%\nonumber
%\end{equation}
\begin{align}
\textnormal{where,} \hspace{3 mm} h(x) &\coloneqq -x\log_2(x)-(1-x)\log_2(1-x)\nonumber,\\
 \mathbb{C}(\rho)) &\coloneqq \max\{0,\lambda_{1}-\lambda_{2}-\lambda_{3}-\lambda_{4}\}. \nonumber
\end{align}
%and 
%\begin{equation}
 %    \mathbb{C}(\rho)) \coloneqq \max\{0,\lambda_{1}-\lambda_{2}-\lambda_{3}-\lambda_{4}\},
%     \nonumber
%\end{equation}
Here, $\lambda_{i}$s are the eigenvalues of the Hermitian matrix $R \coloneqq \sqrt{\sqrt{\rho}\Tilde{\rho}\sqrt{\rho}}$ in decreasing order with 
$\Tilde{\rho} \coloneqq \sigma_{y}\otimes\sigma_{y}\rho^{*}\sigma_{y}\otimes\sigma_{y}$.

We will express the masker $\mathscr{N}_3$ as a two-qubit generic unitary operator $U$ (described in Eq.~\eqref{eqn-unitary-form}).
Now, we will try to find the states belonging to the set $\mathscr{C}'$ which are separable by minimizing over $\mathscr{N}_3$, keeping in mind that the masker $\mathscr{N}_3$ satisfies Eq. \eqref{numerical-commuting-masking-eqn}. Thus, we will get the separable masked states corresponding to the set, $\mathscr{C}'$.
%Hence, we determine the values of $s$ at which the entanglement of $\epsilon'$ becomes zero, keeping Eq. \eqref{numerical-commuting-masking-eqn} satisfied.
%the negativity of the states $\rho_{m}'$. Where negativity is defined as $\max(0,|\lambda_{min}|)$,$\lambda_{min}$ is the minimum of the eigenvalues of the partially transposed matrix, $\rho_{m}'^{T_{A/B}}$.
%\textcolor{red}{The goal is to find a masker for the set, $\mathscr{C}$, which could mask the quantum information contained in the states of this set, without entangling two or more states of the set with the auxiliary qubit}. Hence, we determine the values of p at which the entanglement of $\epsilon'$ becomes zero keeping Eq. \eqref{numerical-commuting-masking-eqn} satisfied. 
To do so, we will minimize the entanglement of formation of the state $\epsilon'$ over all the maskers $\mathscr{N}_3$ with the condition Eq.~\eqref{numerical-commuting-masking-eqn} being satisfied.
%, i.e.,
%\begin{equation}
%\label{numerical-masking-commuting-eqn-two}
   %  \min_{\mathscr{N}_3} E_F(\epsilon') = 0,
%\end{equation}
%with the constraints Eq. \eqref{numerical-commuting-masking-eqn} for each $s$.
%being satisfied at p=1/2. 
$E_F^\textnormal{min}(\epsilon')$ denotes the entanglement of formation of the state $\epsilon'$ minimized over all the maskers.
%The corresponding masker $\mathscr{N}_3$ for which the minimum value of entanglement for the state $\epsilon'$ is obtained will be denoted as $\widetilde{\mathscr{N}}_s(\epsilon')$
%for each mixed state $\epsilon'$. 
%we can say that there exists a unitary masker of a set consisting of at least two mixed commuting states which become separable after masking.
%To find the minimum amount of entanglement among all required for masking we minimized the negativity over the entire set of the parameters and $\rho_{m}'$ i.e.
%The minimum of $N$ can be 0, which implies that the input state is separable. As, corollary 2.1 suggests this is possible only when two maximally entangled states are equally mixed or when  $\rho_{m}'=\frac{1}{2}(\rho_{1}'+\rho_{2}')$ and $\rho_{1}',\rho_{2}'$ are maximally entangled. Our numerical results very well agreed with this  and 
%As a result, we have found the entanglement will be zero when p$ is $0.5$ only, and the pure masked states are maximally entangled under the action of the optimized unitary masker. 
%We plot the entanglement of formation with respect to p$ for all the masked mixed states $(\epsilon')$ of the set $\mathscr{C}'$ parameterized by p$ in Fig. \ref{fig:3} where the unitary masker $U$ is the optimum one at which Eqs. \eqref{numerical-commuting-masking-eqn} and \eqref{numerical-masking-commuting-eqn-two} occur.
%and input states $\rho_{1}=\ket{00}\bra{00},\rho_{2}=\ket{10}\bra{10}$, . 
We plot $E_F^{\textnormal{min}} (\epsilon')$ for every value of $s$ in Fig.~\ref{fig:3}. %substituting $\mathscr{N}_3=\widetilde{\mathscr{N}}_s(\epsilon')$ in Eq.~\eqref{epsilon}.
It is clear from Fig.~\ref{fig:3} that we only get separable masked states at $s=0.5$, and the entanglement of formation of the masked states  corresponding to this masker at the two extrema of $s$ is the maximum, i.e., $1$. 
%This completes our search for a unitary that not only masks the mixed states formed by mixing two pure bipartite commuting states but also assures that entanglement is the ultimate correlation required to mask these states. In other words all the states along the line joining  $\rho_{1} \And \rho_{2}$ gets entangled,by the action of most of the maskers. With p=0.5$ as the point of exception for some of the maskers.
%Our result shows that the set $\mathscr{C}'$ \textcolor{red}{so formed under the action of $U_m$ on the joint state $(\epsilon \otimes \ket{0}\bra{0})$} contains mainly entangled states with only one separable state. 
%It implies that here are the maskers that maximally entangles the two pure states at the end of the line.For these maskers at p=0.5$ the state becomes separable but yet mask able.  
Since to define quantum masking, we require a set consisting of at least two states, we can conclude from our numerical calculation that entanglement is necessary for masking the set of mixed states formed from convex mixtures of two single-qubit pure orthogonal states, $\ket{0}$ and $\ket{1}$, considering $\ket{0}$ as an auxiliary state.
Thus, our numerical findings concur with our analytical results in Subsection~\ref{subsection-A}.

\subsection{Numerical verification of Theorem~\ref{theorem-non-commuting}}
\label{Ex2}

\begin{figure}
     \centering
\includegraphics[width=\linewidth]{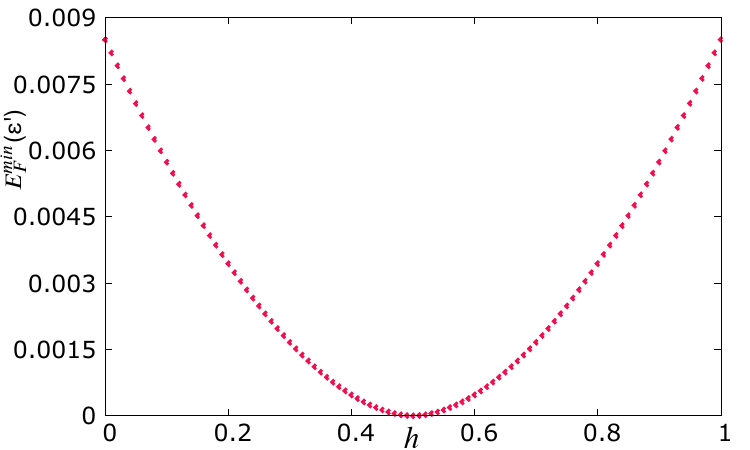}
\caption{\textit{\textbf{Nature of entanglement of formation of masked state, $\varepsilon'$, corresponding to the optimized masker that minimizes the entanglement of the set $\{\varepsilon'\}$, with respect to the mixing parameter $h$.}} 
%Here, we plot entanglement of formation ($\text{E}_\text{F}$) of mixed masked states with respect to the mixing parameter p. 
%And the mixed masked states are the masked states that are formed after masking (by the optimum unitary masker) of mixed states, which are classical mixtures of two single-qubit pure non-commuting states. 
All considerations are the same as in Fig.~\ref{fig:3}, except for the fact that in this context, the set we want to be masked is the two arbitrary single-qubit mixed non-commuting states formed by the convex mixtures of single-qubit pure non-orthogonal states, $\varrho_1$ and $\varrho_2$, parameterized by $h$.}
     %plot showing $E_{F}$ vs p curve for two commuting bits with parameter set $\theta_{1}=0.0534847,\theta_{2}=1.5708,\phi_{1}=4.40012,\phi_{2}=3.24514,\psi_{1}=6.28319,\psi_{2}=12.2705,\alpha_{x}=1.5708,\alpha_{y}=0,\alpha_{z}=0.785398$}.
     \label{fig:4}
 \end{figure}

Here, we will give the numerical realization of Theorem~\ref{theorem-non-commuting}. Let us consider two single-qubit pure non-orthogonal states 
$\varrho_{1} \coloneqq \ket{0}\bra{0}$ and $\varrho_{2} \coloneqq \ket{\beta}\bra{\beta}$, with $\ket{\beta} \coloneqq \cos{\frac{\omega}{2}}\ket{0} +\sin{\frac{\omega}{2}}\ket{1}$ and $0 < \omega < \pi$. The mixed states formed from the convex combination of states, $\varrho_1$ and $\varrho_2$, can be expressed as
\begin{equation*}
    \varepsilon \coloneqq h \varrho_{1}+(1-h)\varrho_{2},
\end{equation*}
with $0\leq h \leq 1$. Now, let us construct a set of mixed states, $\lbrace \varepsilon \rbrace$, parameterized by different $h$, which will be non-commuting. Upon the action of the masker, let's say $\mathscr{N}_4$, and taking $\ket{0}\bra{0}$ as the initial state of the auxiliary system, suppose that the states $\lbrace \varepsilon \rbrace$ is masked into the state $\lbrace \varepsilon' \rbrace$, i.e.,
\begin{align*}
\label{epsilon}
\varepsilon' &\coloneqq \mathscr{N}_4 (\varepsilon \otimes \ket{0}\bra{0}) \mathscr{N}_4^\dagger. 
   %\rho' &= p \rho_{1}'+(1-p)\rho_{2}'
\end{align*}
We try to find an optimal unitary masker that not only masks a set of mixed states, $\lbrace \varepsilon \rbrace$, but also minimizes the entanglement content of the masked states corresponding to the set. 
%Next abiding by the equations 42-46, we used the optimisation algorithm to obtain the parameter  set of the unitary ($U$)and the probability (p$ ), for which the negativity content of the masked state $\rho_{m}$ is minimum. Also, we determined the $N_{min}$. Our results agreed to that of corollary 3.1 and
% $N_{min}=0$ for p=0.5$,
We use the entanglement of formation to measure the entanglement of any quantum state.
As a result, we find for a specific $\omega$ that the minimum entanglement of formation of masked states corresponding to the set of states, $\lbrace \varepsilon \rbrace$, are non-zero for every $h$ except at $h = 0.5$, optimizing over all unitary maskers, as depicted in Fig.~\ref{fig:4}.
%which mask the set of mixed states formed by convex mixtures $\varrho_{1}$ and $\varrho_2$, 
%and we also obtain the optimal unitary ($U_m$) corresponding to this. 
%Then we proceed in a similar fashion as we did for single-qubit commuting states in ~\ref{Ex1}.
%In Fig.~\ref{fig:4} we plot the minimum entanglement of formation of masked states $\text{E}_{\text{F}}(\varepsilon')$ for  all values of $h$. We can see in Fig. \ref{fig:4} that except for p=1/2 the state $\eta'$ is entangled. Also, the entanglement at p=0 and p=1 are equal. 

Hence, with a similar reasoning as before in Appendix~\ref{Ex1}, no masker of set corresponding to two single-qubit mixed non-commuting states formed by convex mixtures of two single-qubit pure non-orthogonal states, $\varrho_{1}$ and $\varrho_2$, will exist that could mask the quantum information contained in the set into a quantum correlation other than entanglement. Thus, our numerical example agrees well with our analytical findings in Theorem~\ref{theorem-non-commuting}. Therefore, entanglement is a necessity for masking the set of two single-qubit mixed non-commuting states.
%and lemma~\ref{l3}.
%(see Fig. \ref{fig:4}).
%when two non-orthogonal equally entangled pure bipartite masked states are mixed in equal proportion, the mixture is still masked but separable.

%Fig.4. is the graphical representation of the result so obtained .
%It depicts a particular example with a specific set of optimised parameters and $\theta>0$ ,for the $E_{F} vs p$ curve along the entire line joining the non-commuting states.

\end{appendix}

\end{document}